\theoremstyle{plain}
\newtheorem{thm}{Theorem}
\newtheorem{prop}{Proposition}[section]
\newtheorem{lemma}[prop]{Lemma}
\newtheorem{cor}[prop]{Corollary}
\theoremstyle{definition}
\newtheorem{rmk}[prop]{Remark}
\numberwithin{equation}{section}
\begin{document}

\begin{frontmatter}

\title{Real eigenvalues in the non-Hermitian Anderson model}
\runtitle{Non-Hermitian Anderson model}

\author{\fnms{Ilya} \snm{Goldsheid}\ead[label=e1]{i.goldsheid@qmul.ac.uk}\thanksref{m1}}
\address{School of Mathematical Sciences, \\
Queen Mary University of London, \\ London E1 4NS, \\
United Kingdom. \\ \printead{e1}}

\and

\author{\fnms{Sasha} \snm{Sodin}\ead[label=e2]{a.sodin@qmul.ac.uk}\thanksref{t2,m1,m2}}
\address{School of Mathematical Sciences, \\
Queen Mary University of London, \\ London E1 4NS, \\
United Kingdom \\ \& \\
School of Mathematical Sciences, \\
Tel Aviv
University, \\
Tel Aviv, 69978, \\ Israel. \\ \printead{e2}}
\thankstext{t2}{Supported in part by the European Research Council starting grant 639305 (SPECTRUM) and by a Royal Society Wolfson Research Merit Award.} 
\affiliation{Queen Mary, University of London\thanksmark{m1} \& Tel Aviv University\thanksmark{m2}}

\runauthor{I.\ Goldsheid \& S.\ Sodin}

\begin{abstract}
The eigenvalues of the Hatano--Nelson non-Hermitian Anderson 
matrices, in the spectral regions in which the Lyapunov exponent exceeds the non-Hermiticity parameter, are shown to be real and exponentially close to the Hermitian eigenvalues.  This complements previous results, according to which the eigenvalues in the spectral regions in which the non-Hermiticity parameter exceeds the Lyapunov exponent are aligned on curves in the complex plane.
\end{abstract}

\begin{keyword}[class=MSC]
\kwd[Primary ]{}
\kwd{47B80}
\kwd[; secondary ]{47B36}
\end{keyword}

\begin{keyword}
\kwd{Non-Hermitian}
\kwd{Anderson model}
\kwd{random Schr\"odinger}
\end{keyword}

\end{frontmatter}

\section{Introduction and the main result}

Let $v_1, v_2, \cdots$ be independent, identically distributed random variables (potential) and let $g$ be a real parameter, $g\ge0$.
Consider the $N \times N$ random matrix
\begin{equation}\label{eq:defop}
H_N(g) = \left( \begin{array}{ccccccc}
v_1 &e^{-g}& 0&0&\cdots&0&e^{g} \\
e^{g} & v_2& e^{-g} &0&\cdots&0&0\\
0&e^{g}&v_3&e^{-g}&\cdots&0&0\\
&&&\ddots&&&\\
&&&&\ddots&&\\
0&0&0&0&&v_{N-1}&e^{-g}\\
e^{-g}&0&0&0&\cdots&e^{g}&v_N
\end{array}\right)\end{equation}

\medskip
Non-Hermitian matrices of the form (\ref{eq:defop}) were introduced and
studied by Hatano and Nelson \cite{HN1,HN2} to describe the reaction of an Anderson-localised quantum
particle on a ring to a constant imaginary vector field. For $g = 0$, the matrix $H_N = H_N(0)$ is Hermitian, and the eigenvalues are real. For $g>0$,
the eigenvalues are not necessarily real.  The numerical studies of Hatano and Nelson (carried out for the case when the $v_j$ have the 
uniform $[-1,1]$ distribution) suggest that there exist critical values $\overline{g}_{\textrm{cr}}>\underline{g}_{\textrm{cr}}> 0$ such that the following holds:
\begin{enumerate}[label=(\alph*)]
\item For $0\le g<\underline{g}_\textrm{cr}$, all the eigenvalues of $H_N(g)$ are real;
\item for $g\in (\underline{g}_{\textrm{cr}},\overline{g}_{\textrm{cr}})$, some of the eigenvalues remain
real, while others align along a smooth curve in the complex plane;
\item essentially all eigenvalues move out of the real axis when $g> \overline{g}_{\textrm{cr}}$.
\end{enumerate}

A variant of this numerical experiment in the regime (b) is depicted on Figure~\ref{fig1}.  These observations, and especially (b) and (c), were supported by the subsequent analysis performed on the physical level of rigour;
see especially \cite{BZ,BSB,FZ,Z}.
We refer to these works and also to \cite{Molinari} and references therein
for a discussion  of the properties of the (left and right) eigenvectors of
$H_N(g)$, and for extensions to the strip and to higher dimension, which will mostly
remain outside the scope of this paper (see however Section~\ref{s:comments}).

\medskip
In the mathematical works \cite{GKh,GKh1,GKh2} of Khoruzhenko and the first author it was shown that the behaviour of the eigenvalues depends crucially on the Lyapunov 
exponent $\gamma(E)$ associated to the Hermitian operator (see Section~\ref{s:prel}, equation (\ref{eq:gamma1})).  Let us label the algebraic
spectrum of $H_N(g)$ $\{ \lambda_1(g), \cdots, \lambda_N(g)\}$ so that each $\lambda_j(g)$ is
a continuous function of $g$, and $\lambda_1(0) \geq \cdots \geq \lambda_N(0)$ (cf.\
Lemma~\ref{l:eq} below).   

Fix $j$; for $g = 0$ the eigenvalue $\lambda_j(0)$ lies on the real axis. It was shown in  \cite{GKh,GKh1} that for $g < \gamma(\lambda_j(0))$ the eigenvalue $\lambda_j(g)$ remains in the vicinity of the real axis (i.e.\ it lies in the strip $|\Im \lambda|<\epsilon$, provided that  $N \geq N_0(\epsilon)$), whereas for $g > \gamma(\lambda_j(0))$ it escapes to certain polynomial curves $\Gamma_g^{(N)}$ in the complex plane. These statements hold simultaneously for all the eigenvalues $\lambda_j(g)$ on an event of asymptotically full probability. As $N \to \infty$, $\Gamma_g^{(N)}$ converges to the   curve
$\Gamma_g = \{ z \in \mathbb C \, \mid \, \gamma(z) = g\}$.

\begin{figure}
\floatbox[{\capbeside\thisfloatsetup{capbesideposition={right,center},capbesidewidth=5cm}}]{figure}[\FBwidth]
{\caption{A realisation of the spectrum for a variant of (\ref{eq:defop}) with a $2$-periodic background (cf.\ Remark~\ref{rmk:backgr});  $N = 70$ and $g = .08 \in (\underline{g}_{\textrm{cr}},\overline{g}_{\textrm{cr}})$. The  axis is split in 5 intervals. On the odd ones, the Lyapunov exponent is $\geq .08$; on the even ones, it is $\leq .08$.}\label{fig1}}
{\includegraphics[width=7.2cm]{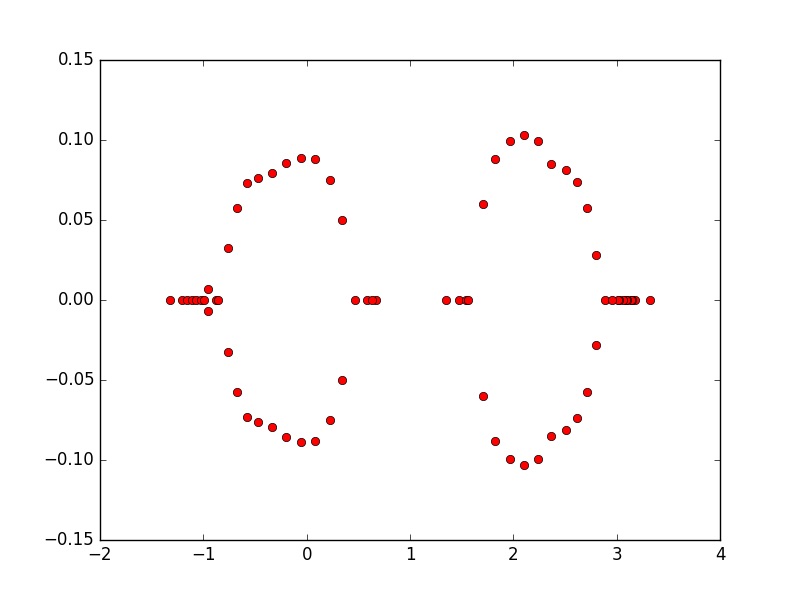}}
\end{figure}

In \cite{GKh2}, these results were extended to a wide class of  deterministic potentials, under the mild assumption of  existence of the integrated density of states $\mathcal{N}(E)$.
Under this assumption, one
defines the Lyapunov exponent via the Thouless formula
\[ \gamma(E) = \int \log|E-E'| \, d\mathcal{N}(E')~.\]
In the case of stationary random sequences, this definition coincides with the
usual one, given in (\ref{eq:gamma1}).
 
 Moreover, it was shown in \cite{GKh2} that the eigenvalues near the curves $\Gamma_g$ boast  regular behaviour on a local scale:
after re-scaling the eigenvalues near a fixed $z \in \Gamma_g$
by the mean (complex) spacing, these align, in the large $N$ limit, on an arithmetic progression.

\medskip
Consequently, the critical values should be given by the formul{\ae}
\[ \underline{g}_\textrm{cr} = \min \{ \gamma(E) \, \mid \, E \in \mathcal S\}~, \quad
\overline{g}_{\textrm{cr}} = \max \{ \gamma(E) \, \mid \, E \in \mathcal S\}~,\]
where $\mathcal S$ is the support of the limiting eigenvalue distribution of $H_N(0)$ (i.e.\ the support of the integrated density of states $\mathcal{N}(E)$ defined in \eqref{eq:ids}, or equivalently the essential spectrum of the infinite-volume self-adjoint operator). 

The results proved in \cite{GKh,GKh1,GKh2} provide a detailed statistical description of the behaviour of the eigenvalue $\lambda_j(g)$ for $g > \gamma(\lambda_j(0))$, both in the global and the local limiting regime; thus one has a complete description of the regime (c), and a partial one -- of (b).

The description of the behaviour for $g < \gamma(\lambda_j(0))$ remained incomplete. In fact, neither the rigorous analysis of \cite{GKh,GKh1,GKh2} nor the heuristic arguments of \cite{BZ,BSB,FZ,Z} provide an indication on whether these eigenvalues are truly real (as suggested by  computer simulations  such as Figure~\ref{fig1}), or they may have a non-zero but asymptotically vanishing imaginary part. 

To the best of our knowledge, no progress on this question has been made since the work \cite{GKh2} had been published. We are also not aware of any previous analysis of the
spacings between these eigenvalues (the local regime).

\medskip In this work we provide a reasonably complete description of the regime $\gamma(\lambda_j(0))>g$, thus settling these two questions. We  prove that in
the case of (\ref{eq:defop}) with independent, identically distributed potential
the corresponding non-Hermitian eigenvalues $\lambda_j(g)$ do in fact
remain on the real axis, and, moreover, they are exponentially close to the Hermitian eigenvalues $\lambda_j(0)$. In other words, if $j$ is fixed and $g$ varies from $0$ to $\infty$, the eigenvalue $\lambda_j(g)$ remains real and exponentially close to $\lambda_j(0)$
for $g \leq \gamma(\lambda_j(0)) - \epsilon$ (where $\epsilon > 0$ is arbitrary small, and $N \geq N_0(\epsilon)$). This complements the result of \cite{GKh2}, according to which $\lambda_j(g)$ aligns near $\Gamma_g$ for $g \geq \gamma(\lambda_j(0)) + \epsilon$.  See Figure~\ref{fig2} for an illustration. 

\begin{figure}
\floatbox[{\capbeside\thisfloatsetup{capbesideposition={right,center},capbesidewidth=3cm}}]{figure}[\FBwidth]
{\caption{The curves connecting the points $(\lambda_j(g), g) \in \mathbb{R}^2$ with $\lambda_j(g) \in \mathbb{R}$, for $N = 70$ and $v_j \sim \operatorname{Unif}[0, 4]$. In the large $N$ limit, the upper envelope of these curves converges to the graph of $\gamma(E)$ on $\mathcal S$. Note that the curves are almost vertical.}\label{fig2}}
{\includegraphics[width=7.5cm]{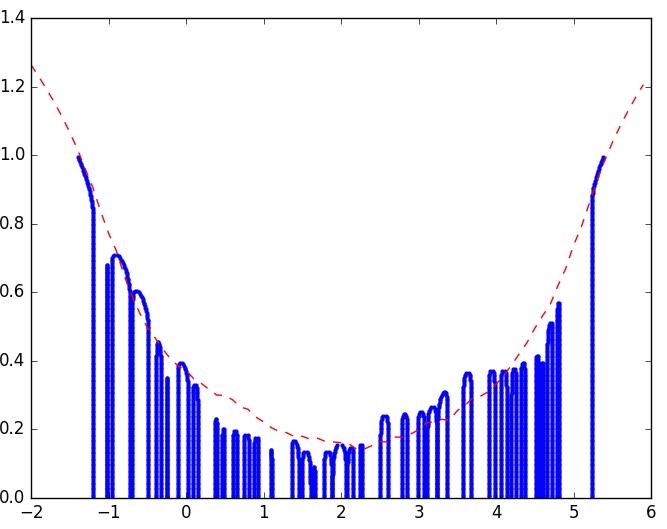}}
\end{figure}

In contrast to the potential-theoretic approach of \cite{GKh2}, our arguments are based on the properties
of products of random matrices.

\begin{thm}\label{thm:realev} Assume that $(v_j)$ is a sequence of i.i.d. random variables and that
$|v_1| \leq A < \infty$ almost surely. Then for any $\epsilon > 0$ 
\[
\mathbb{P} \Big\{ \text{for all $1 \leq j \leq N$ and $g \in [0,  \gamma(\lambda_j(0)) - \epsilon]$ one has:}  \,\,\,
\lambda_j(g) \in \mathbb R \Big\} \underset{N \to \infty}\longrightarrow 1\]
and moreover there exists $c = c(\epsilon) > 0$ such that
\[ \mathbb{P} \Big\{ \forall j \,\,\, \forall g \in [0,  \gamma(\lambda_j(0)) - \epsilon]:  \,\,\,
\lambda_j(g) \in  ( \lambda_j(0) - e^{-cN}, \lambda_j(0)+e^{-cN})  \Big\} \underset{N \to \infty}\longrightarrow 1~.\] 
\end{thm}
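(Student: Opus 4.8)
The plan is to reduce the problem to a statement about the transfer matrix cocycle associated with the Hermitian operator $H_N(0)$ and then to use the positivity of the Lyapunov exponent (specifically $\gamma(\lambda_j(0)) - \epsilon > g$) to control the perturbation caused by turning on $g$. First I would record the elementary but crucial algebraic fact -- a version of which must appear as a lemma in the paper -- that $H_N(g)$ is obtained from $H_N(0)$ by the similarity transformation $H_N(g) = D_g H_N(0) D_g^{-1} + (\text{corner corrections})$, where $D_g = \operatorname{diag}(e^{-g}, e^{-2g}, \dots, e^{-Ng})$: the conjugation by $D_g$ exactly reproduces the $e^{\pm g}$ factors on the sub- and super-diagonals, and the only genuinely non-Hermitian effect is confined to the two corner entries coupling sites $1$ and $N$. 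Thus $\det(H_N(g) - z) = \det(H_N(0) - z) + R_N(g, z)$, where $R_N$ is a correction term built from minors of $H_N(0) - z$ that omit a row/column near the "seam". The eigenvalue $\lambda_j(g)$ is real and close to $\lambda_j(0)$ precisely when $R_N$ is small enough, relative to the gaps in the spectrum of $H_N(0)$, for $z$ in a small real neighbourhood of $\lambda_j(0)$.

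Next I would express the relevant minors in terms of products of the $2\times 2$ transfer matrices $T_n(z) = \begin{pmatrix} v_n - z & -1 \\ 1 & 0 \end{pmatrix}$, whose norm grows like $e^{\gamma(z) n}$ by the Furstenberg theorem (this is exactly the Lyapunov exponent $\gamma$ from \eqref{eq:gamma1}). The Hermitian determinant $\det(H_N(0) - z)$ is, up to sign, the $(1,1)$ (or a fixed) entry of the full product $T_N(z) \cdots T_1(z)$, while the corner-correction term $R_N(g,z)$ involves the product with one or two factors deleted, or split products of shorter length, multiplied by the purely exponential factor $e^{-gN}$ coming from the corner entries of $D_g^{\pm 1}$. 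The point is a \emph{competition of exponentials}: the correction $R_N$ carries a prefactor of order $e^{-gN}$ (from the corners) while the "missing" transfer-matrix factors only cost a bounded amount, so $|R_N(g,z)| \lesssim e^{-gN} \|T_N(z)\cdots T_1(z)\| \le e^{-gN} e^{(\gamma(z) + o(1))N}$, whereas one should think of $|\det(H_N(0) - z)|$ near $\lambda_j(0)$, after dividing by the distance to the nearest eigenvalue, as being of size roughly $e^{(\gamma(z) - o(1))N}$ — in short $R_N/\det(H_N(0)-z) \lesssim e^{-(g' )N}$ with $g' = \gamma(\lambda_j(0)) - \epsilon - g + o(1) > 0$ uniformly over the allowed range of $g$. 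Quantitatively, a Rouché argument on the real line inside a disc of radius $e^{-cN}$ around each $\lambda_j(0)$ then shows that $\det(H_N(g) - z)$ has exactly one zero there, and since $H_N(g)$ is a real matrix, a non-real zero would come with its conjugate, forcing a second zero in the (conjugation-invariant) disc — a contradiction. Hence the zero is real.

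The main obstacle, and the place where the i.i.d.\ (more precisely: the availability of large-deviation estimates for the cocycle) hypothesis is really used, is the need for these bounds to hold \emph{simultaneously for all $j$ and uniformly in $g$}, and in particular the lower bound showing that $\det(H_N(0) - z)$ is not anomalously small away from its zeros, i.e.\ that consecutive eigenvalues $\lambda_j(0)$ are not exponentially close and that the relevant transfer-matrix products do not drop far below their typical exponential size on the whole real interval $\{E : \gamma(E) \ge g + \epsilon\}$. This requires: (i) a large-deviations upper bound $\mathbb{P}\{\|T_n(E) \cdots T_1(E)\| \ge e^{(\gamma(E) + \delta)n}\} \le e^{-c(\delta) n}$, uniform in $E$ on compacts, to bound $R_N$; (ii) a matching lower bound / anti-concentration, so that the seam-deleted product entering $R_N$ does not accidentally coincide in size with the full product, plus a Wegner-type or minimal-spacing estimate ruling out exponentially close pairs of Hermitian eigenvalues (available here since $H_N(0)$ is a tridiagonal Hermitian matrix with an absolutely continuous potential, or via a Thouless-formula lower bound on $\tfrac1N \log|\det(H_N(0) - \lambda_j(0) \pm e^{-cN})|$); and (iii) a union bound over an $e^{-CN}$-net in $z$ and over the $N$ indices $j$ and over a fine grid in $g$, combined with an a priori Lipschitz/continuity bound in $z$ and $g$ to pass from the net to all values. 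Once these exponential estimates are in place, the choice $c = c(\epsilon)$ is made to absorb all the $\delta$'s and the polynomial losses from the union bounds, and the two displayed probabilities in the theorem follow; I would set up (i)–(iii) first, as lemmas, and only then assemble the Rouché/reality argument.
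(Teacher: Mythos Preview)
Your overall architecture---reduce to a perturbation of the Hermitian characteristic polynomial, bound the correction, and use a Rouch\'e-plus-reality argument---is reasonable in spirit, but there are two concrete errors and one genuine gap that would prevent the proof from closing as written.

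First, the algebra is off. After the diagonal conjugation you describe, the two corner entries become $e^{Ng}$ and $e^{-Ng}$ (not both small), and in fact the exact identity is $\det(z\mathbbm{1}-H_N(g)) = \pm\bigl(\tr\Phi_N(z) - 2\cosh(Ng)\bigr)$, so your correction term is the constant $R_N = 2-2\cosh(Ng)$, of size $e^{Ng}$, not $e^{-gN}$. Consequently the ``competition of exponentials'' runs the other way: you must show that $|\det(H_N(0)-z)| = |\tr\Phi_N(z) - 2|$ exceeds $e^{Ng}$ on the Rouch\'e circle, which is a lower bound on the \emph{trace} of the transfer matrix product, not on its norm. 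Also, $\det(H_N(0)-z)$ is not the $(1,1)$ entry of $\Phi_N(z)$ (that is the Dirichlet polynomial); it is $\tr\Phi_N(z)-2$.

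This brings up the genuine gap. Large-deviation bounds for $\|\Phi_N(E)\|$ (your item (i)) do not by themselves control $|\tr\Phi_N(E)|$ or, equivalently, the spectral radius $\rho(\Phi_N(E))$: on every band $\rho(\Phi_N)=1$ while $\|\Phi_N\|$ is typically $e^{\gamma N}$, and nothing in standard large-deviation theory rules out similar cancellations at your mesh points in the gaps. The paper isolates this as the key technical step: a separate lemma (Lemma~\ref{l:radnorm}) showing that $\rho(\Phi_N(E))$ is comparable to $\|\Phi_N(E)\|$ except on an event of exponentially small probability, proved via the singular value decomposition and regularity of the invariant measure. Your item~(ii) gestures at ``anti-concentration'' and a ``Thouless-formula lower bound'', but neither a Wegner estimate (which is unavailable for, e.g., Bernoulli potential---the theorem makes no absolute-continuity assumption) nor the pointwise Thouless formula gives this; it is a statement about random matrix products that has to be proved on its own. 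The paper then combines this with Bourgain's eigenvalue-spacing lemma (your minimal-spacing input) to get the uniform lower bound on $\rho(\Phi_N)$ over the gaps (Proposition~\ref{prop:main}), and finishes not by Rouch\'e but by the intermediate value theorem on the real axis: $\rho(\Phi_N)$ equals $1$ at both edges of each even gap and exceeds $e^{N(g+\epsilon/2)}$ somewhere inside, hence $\rho(\Phi_N(E))=e^{Ng}$ has two real solutions there, which by Lemma~\ref{l:eq} are exactly $\lambda_{2j}(g)$ and $\lambda_{2j+1}(g)$. This sidesteps the complex-analytic estimates your Rouch\'e argument would additionally require.
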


\begin{rmk} The first part of the theorem is essentially equivalent to the following statement: if $I$ is an interval, then for any $\epsilon > 0$
\[ \lim_{N \to \infty} \mathbb P \left\{ \text{for all $\lambda_j(0) \in I$ and all $g \leq \inf_{E \in I} \gamma(E) - \epsilon$ one has $\lambda_j(g) \in I$} \right\} = 1~.\] 
\end{rmk}

\begin{rmk} Without invoking new ideas, the theorem can be shown to hold under the
weaker assumption $\mathbb{E} |v_1|^\eta < \infty$ for some $\eta > 0$. We restrict ourselves to the case of bounded random variables, to keep the argument reasonably short. On the other hand, we do insist on avoiding any regularity assumptions on the potential.\end{rmk}

\begin{rmk}\label{rmk:backgr} Only minor adjustments in the argument are required to consider a variant of the model in which $v_j$ is replaced with $v_j + a_j$, where $(a_j)$ is a non-random periodic sequence. For cosmetic reasons, we chose to depict this variant in Figure~\ref{fig1}, which we included for illustration only.
\end{rmk}

\begin{rmk}\label{rmk:ext} Similarly to \cite{GKh1} and in contrast to \cite{GKh2} we assume that $(v_j)$ is an i.i.d.\ sequence. While we do not expect the conclusion of
the  theorem to hold in the generality of
\cite{GKh2}, additional special cases such as operators with almost-periodic potentials
merit further consideration.
\end{rmk}

The theorem implies that the local eigenvalue statistics of $H_N(g)$ in the
regime $g < \gamma(\lambda_j(0))$ are the same as for the Hermitian operator $H_N(0)$.
\begin{cor}\label{cor}
In the setting of Theorem~\ref{thm:realev}, assume that for $g=0$
\begin{equation}\label{eq:bpois} \sum_{j=1}^N \delta_{(\lambda_j(g) - E) N \rho} \,\,\overset{\operatorname{distr}}\longrightarrow
\,\,\text{standard Poisson process} \end{equation}
for some $E \in \mathbb R$, $\rho > 0$, as $N \to \infty$. Then (\ref{eq:bpois}) holds for all  $0 \leq g < \gamma(E)$.
\end{cor}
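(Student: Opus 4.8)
The plan is to transfer the convergence from $g=0$ to a fixed $g\in[0,\gamma(E))$ by means of Theorem~\ref{thm:realev}. Since convergence of point processes in the vague topology is equivalent to convergence of the Laplace functionals against nonnegative $f\in C_c(\mathbb R)$, it suffices to describe, for each fixed $M$, the eigenvalues of $H_N(g)$ lying in the window $W_N:=\{z:|z-E|\le M/(N\rho)\}$; the process may then be viewed as living on $\mathbb R$, since any complex eigenvalues will be seen to stay far from $E$. Choose $\epsilon>0$ with $g+\epsilon<\gamma(E)$; by continuity of $\gamma$ (the logarithmic potential of the atomless measure $d\mathcal N$), fix an open interval $I\ni E$ with $\gamma\ge g+\epsilon$ on $\overline I$ and with $\operatorname{diam}I<\operatorname{dist}(E,\Gamma_g)$ --- possible since $\gamma(E)>g$ forces $E\notin\Gamma_g$ --- and set $\delta:=\operatorname{dist}(E,\mathbb R\setminus I)>0$.

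By Theorem~\ref{thm:realev} there are $c=c(\epsilon)>0$ and events $\Omega_N$ of probability $\to1$ on which every $j$ with $\lambda_j(0)\in\overline I$ has $\lambda_j(g)\in\mathbb R$ and $|\lambda_j(g)-\lambda_j(0)|\le e^{-cN}$. Once $N$ is large enough that $\{|z-E|\le 2M/(N\rho)\}\subset I$, this means that every $j$ with $(\lambda_j(0)-E)N\rho\in[-M,M]$ contributes a \emph{real} atom to $\sum_j\delta_{(\lambda_j(g)-E)N\rho}$, displaced from its position in $\sum_j\delta_{(\lambda_j(0)-E)N\rho}$ by at most $e^{-cN}N\rho\to0$. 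Thus, on $\Omega_N$, the two processes agree on $[-M,M]$ up to a uniformly negligible shift of atoms, \emph{provided} no other eigenvalue of $H_N(g)$ enters $W_N$.

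To rule out spurious eigenvalues we invoke the global picture of \cite{GKh,GKh1,GKh2}: on an event of asymptotically full probability, every $\lambda_j(g)$ with $\gamma(\lambda_j(0))<g$ lies within $\tfrac12\operatorname{dist}(E,\Gamma_g)$ of $\Gamma_g$, hence outside $W_N$. The same event $\Omega_N$ of Theorem~\ref{thm:realev} also controls those $j$ with $\gamma(\lambda_j(0))\ge g+\epsilon$ whose $\lambda_j(0)$ lies outside $\overline I$: for these too $|\lambda_j(g)-\lambda_j(0)|\le e^{-cN}$, so $\lambda_j(g)$ is at distance $\ge\delta-e^{-cN}$ from $E$, again outside $W_N$. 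The only remaining indices are those with $\gamma(\lambda_j(0))\in[g,g+\epsilon)$; their starting points lie outside $\overline I$ and their number is at most $(1+o(1))N\,\mathcal N(\{g\le\gamma<g+\epsilon\})$. Since $E\notin\{\gamma=g\}$ and (as one checks) $\mathcal N(\{\gamma=g\})=0$, one closes the argument by letting $\epsilon\downarrow0$ together with an a priori bound forbidding concentration of eigenvalues of $H_N(g)$ at the scale of $W_N$. On the intersection of these good events the processes $\sum_j\delta_{(\lambda_j(g)-E)N\rho}$ and $\sum_j\delta_{(\lambda_j(0)-E)N\rho}$ share the same vague limit, which by hypothesis is the standard Poisson process.

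The delicate step is the last one: Theorem~\ref{thm:realev} controls the eigenvalues born near $E$, but keeping the rest of the non-Hermitian spectrum --- especially the eigenvalues that have migrated onto or near $\Gamma_g$ --- uniformly away from $E$ requires the complementary information of \cite{GKh,GKh1,GKh2}, together with the mild fact that the threshold set $\{\gamma=g\}$ is $\mathcal N$-negligible.
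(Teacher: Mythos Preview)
The paper gives no proof of the corollary beyond the sentence preceding it; it is treated as an immediate consequence of the second part of Theorem~\ref{thm:realev}. Your write-up is considerably more careful than this, and you correctly identify a genuine subtlety the paper glosses over: Theorem~\ref{thm:realev} pins down $\lambda_j(g)$ only for those $j$ with $\gamma(\lambda_j(0))\ge g+\epsilon$, so one must separately argue that no ``foreign'' eigenvalue $\lambda_m(g)$ with $\lambda_m(0)$ far from $E$ lands in the window $W_N$.

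That said, your handling of the threshold indices (those with $\gamma(\lambda_j(0))\in[g,g+\epsilon)$) is where the argument actually breaks. Knowing that there are only $o(N)$ such indices does nothing to prevent some of them from sitting in $W_N$; the unspecified ``a priori bound forbidding concentration of eigenvalues of $H_N(g)$ at the scale of $W_N$'' would have to be a \emph{matching two-sided} count of non-Hermitian eigenvalues in a window of size $O(1/N)$, which is not available and is essentially what you are trying to prove. A cleaner route eliminates the threshold case altogether by using the \emph{proof} of Theorem~\ref{thm:realev} rather than its statement: on the good event of Proposition~\ref{prop:main}, every even gap $G_{2k}$ meeting $I$ automatically has $\max_{G_{2k}}\gamma\ge g+\epsilon$, and since $\operatorname{tr}\Phi_N$ is unimodal on each gap (it is a degree-$N$ polynomial with all $N$ roots of $\operatorname{tr}\Phi_N-2$ real, so exactly one critical point per gap), $G_{2k}$ contains \emph{exactly two} real eigenvalues of $H_N(g)$, both within $e^{-cN}$ of its endpoints. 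Hence every real eigenvalue of $H_N(g)$ in $I$ is already within $e^{-cN}$ of some $\lambda_j(0)$ near $I$ --- no residual case. If the point process is read on $\mathbb R$ this finishes the corollary; if one insists on convergence in $\mathbb C$, one still needs that no complex eigenvalue approaches $E$, and for that your appeal to \cite{GKh,GKh1,GKh2} (the complex eigenvalues cluster near $\Gamma_g$, which is bounded away from $E$) is the right move, though you should check that the uniformity over $j$ asserted there covers all $j$ with $\gamma(\lambda_j(0))<g$ without an $\epsilon$-gap of its own.
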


In the Hermitian case $g=0$, a limit theorem of the form (\ref{eq:bpois}) was first proved by Molchanov \cite{Mol}
for a class of (continual) one-dimensional  Hermitian random Schr\"odinger operators.
An extension to higher-dimensional operators in the regime of Anderson localisation
was proved by Minami \cite{Min}; his result implies that (\ref{eq:bpois})
holds (for $g=0$) if the cumulative distribution function of $v_1$ is uniformly  Lipschitz, with
\begin{equation}\label{eq:ids}
 \rho = \rho(E) = \mathcal N'(E)~, \quad  \mathcal N(E) = \lim_{N \to \infty}
 \frac1N \# \{ \lambda_j(0) < E\}~.
\end{equation}
The existence of the density of states $\rho(E) = \mathcal N'(E)$ 
(in the sense of Radon) in this situation follows from an argument of Wegner \cite{Weg}.

Recently, Bourgain showed \cite{B3} for the one-dimensional case
that the density of states exists (and in
fact $\mathcal N$ is $C^\infty$ smooth) whenever the cumulative distribution function of $v_1$ is
uniformly H\"older continuous of some order $\nu > 0$. In \cite{B}, he showed
that (\ref{eq:bpois})   holds (for $g=0$) under the same assumptions, for the case of Dirichlet boundary
conditions (i.e.\ the top-right and bottom-left corner matrix elements are set to zero). The argument of \cite{B} can be adjusted to periodic boundary conditions (i.e.\ to $H_N(0)$). Combining this  with Corollary~\ref{cor}, we obtain that,  under the same assumption,   (\ref{eq:bpois}) also holds for all $g < \gamma(E)$ (at least, if $v_1$ is bounded
almost surely).

\paragraph{The logical structure of the paper}
 The key ingredient in the proof of Theorem~\ref{thm:realev} is a uniform lower bound
on the spectral radius of the transfer matrices associated with the Hermitian matrices $H_N(0)$, outside exponentially small neighbourhoods of the bands. This bound, possibly of independent interest, is stated as Proposition~\ref{prop:main} in Section~\ref{s:key}, where we also provide its proof. In Section~\ref{s:pf} we use it to prove Theorem~\ref{thm:realev}. 

The proof of Proposition~\ref{prop:main} makes use of several facts from the theory of random matrix products: particularly, a large deviation bound for the norm (Lemma~\ref{l:largedev}) and a comparison between the norm and the spectral radius (Lemma~\ref{l:radnorm}). While such statements are well known (the former goes back to the work of Le Page \cite{LePage2}, and the latter -- to the work of Guivarc'h \cite{G} and Reddy \cite{R}), the form in which we found them (and particularly the latter one) in the  literature is somewhat weaker than what is needed for our purposes. Therefore we develop, in Sections~\ref{s:rho} and~\ref{s:add}, an approach (close in spirit to the article \cite{SVW} of Shubin--Vakilian--Wolff and to unpublished work of the first author on the central limit theorem for eigenvalues of random matrix products) which allows us to re-prove these statements in the required form. 
Two other important ingredients of the proof of Proposition~\ref{prop:main} are Lemmata~\ref{l:bourgain} and \ref{l:lepage}, due to Bourgain \cite{B} and Le Page \cite{LePage}, respectively. The latter lemma lies in the field of random matrix products, and in the short Section~\ref{s:lepage}, we deduce it from Lemma~\ref{l:largedev}.

In  Section~\ref{s:prel}, we formulate the definitions and the lemmata required to state and prove Proposition~\ref{prop:main}. Possible generalisations and extensions of Theorem~\ref{thm:realev} are discussed in Section~\ref{s:comments}.

\section{Preliminaries}\label{s:prel}

\paragraph{Transfer matrices}
Let $E \in \mathbb R$. For $N =1,2,\cdots$, define
\[ \Phi_N(E) = T_N(E) \cdots T_2 (E)T_1(E)~,\]
where
\begin{equation}\label{eq:transf} T_j(E)  = \left(\begin{array}{cc} E - v_j& -1 \\ 1 & 0\end{array}\right) \in SL_2(\mathbb R)~. \end{equation}
More generally, one may consider the matrices $T_j(z)$ and $\Phi_N(z)$
for $z \in \mathbb C$. As usual, $\Phi_N(z)$ is associated to the formal 
solutions $\psi$ of the equation 
\[ \psi_{j-1} + v_j \psi_j + \psi_{j+1} = z \psi_j~, \quad j \geq 1\]
as follows:
\[ \Phi_N \begin{pmatrix} \psi_1 \\ \psi_0 \end{pmatrix} = \begin{pmatrix} \psi_{N+1} \\ \psi_N \end{pmatrix}~. \]

Denote 
\begin{equation}\label{eq:gamma1}
\gamma(E) = \lim_{N \to \infty} \frac1N \mathbb{E} \log \|\Phi_N(E)\|~.
\end{equation}
According to a result of Furstenberg and Kesten \cite{FK}, for any stationary ergodic sequence $v=(v_j)$, the following equality holds with probability one for any
(fixed) $E$:
\begin{equation}\label{eq:gamma2}
\lim_{N \to \infty} \frac1N \log \|\Phi_N(E)\| = \gamma(E)~.
\end{equation}
We emphasise that (\ref{eq:gamma2}) does not hold simultaneously for all $E$ (see \cite{G1}); in fact, in the i.i.d.\ case the left-hand side of (\ref{eq:gamma1}) vanishes
on a  dense random subset of $\mathcal S$.

\smallskip
A fundamental fact which is crucial for our considerations is the 
positivity of the Lyapunov exponent: in the i.i.d.\ case, Furstenberg's theorem \cite{F} implies that $\gamma(E) > 0$ for all $E \in \mathbb{R}$. (Formally, we use the quantitative version (\ref{eq:svw}) of this fact.)

\paragraph{Large deviations}
Large deviations bounds for the norm of a random matrix
product go back to the work of Le Page \cite{LePage2}. There are numerous extensions, see particularly the recent work \cite{Sert}, where a large deviation principle was obtained, and
references therein. We need the following upper bound, close to the original work of Le Page;
a proof is provided in Section~\ref{s:largedev}.
\begin{lemma}[Le Page]\label{l:largedev}
If $\mathbb E |v_1|^\eta< \infty$ for some $\eta > 0$, then for any $R > 0$ there exist $C,c>0$ such that for $\epsilon \in (0, 1/e)$ and $|E| \leq R$
\begin{equation}\label{eq:largedev} \mathbb{P} \left\{ \left| \frac1N \log \|\Phi_N(E)\| - \gamma (E) \right| \geq \epsilon \right\} \leq \left(C \log \frac1\epsilon\right) \exp\left[- \frac {c\epsilon^2N}{\log \frac1\epsilon}\right]~.\end{equation}
\end{lemma}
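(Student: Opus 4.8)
The plan is to prove the large deviation bound \eqref{eq:largedev} by combining a uniform exponential-moment estimate for the random matrix product (in the spirit of Le Page) with a Chernoff/Azuma-type argument tuned to the regime $\epsilon \downarrow 0$. Concretely, for $z \in \mathbb{R}^2 \setminus \{0\}$ and a unit vector $u = z/\|z\|$, write $\log \|\Phi_N u\| = \sum_{k=1}^{N} \xi_k$, where $\xi_k = \log \|T_k \cdots T_1 u\| - \log\|T_{k-1}\cdots T_1 u\|$ is the logarithmic growth contributed by the $k$-th step. The increments $\xi_k$ depend only on $v_k$ and on the direction of $T_{k-1}\cdots T_1 u$, which lives on the projective line $\mathbb{P}^1$; this is the standard Furstenberg cocycle decomposition. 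The key analytic input is that, because $\mathbb{E}|v_1|^\eta < \infty$ and the potential is i.i.d., there is $s_0 \in (0, \eta]$ such that $\sup_{\bar u \in \mathbb{P}^1} \mathbb{E}\, e^{s \, \log\|T_1 \bar u\|} = \sup_{\bar u} \mathbb{E}\, \|T_1 \bar u\|^{s} < \infty$ for $|s| \leq s_0$, uniformly for $|E| \leq R$; here one uses $\|T_1\bar u\|, \|T_1^{-1}\bar u\| \leq C(1 + |v_1|)$ together with $SL_2$-norm bounds, so the negative moments are controlled as well. This uniform exponential-moment bound, combined with Furstenberg positivity in the quantitative form \eqref{eq:svw} (contraction on $\mathbb{P}^1$), yields that the log of the transfer-matrix norm has a Laplace transform comparable to that of a sum of weakly dependent, bounded-exponential-moment variables with mean $\gamma(E)$; this is the mechanism behind Le Page's theorem.

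The steps, in order, would be: (i) establish the uniform finiteness of $\sup_{\bar u}\mathbb{E}\|T_1\bar u\|^{\pm s}$ for small $|s|$ and $|E|\le R$, and deduce a bound of the form $\mathbb{E}\, e^{s(\log\|\Phi_N(E)\| - N\gamma(E))} \leq e^{C s^2 N}$ valid for $|s| \leq s_1$, with $C, s_1$ depending only on $R$ and the law of $v_1$ — this is the standard but slightly delicate ``spectral gap for the transfer operator on $\mathbb{P}^1$'' estimate (one decomposes $\mathbb{E}\|\Phi_N \bar u\|^s$ via the transfer operator $(\mathcal{L}_s f)(\bar u) = \mathbb{E}[\|T_1 \bar u\|^s f(\overline{T_1 \bar u})]$, whose leading eigenvalue $\lambda(s)$ is smooth with $\lambda(0)=1$, $\lambda'(0)=\gamma(E)$; then $\mathbb{E}\|\Phi_N\bar u\|^s \asymp \lambda(s)^N$). (ii) From $\mathbb{E}\, e^{s(\log\|\Phi_N\| - N\gamma)} \leq e^{Cs^2N}$ (and the analogous lower-tail bound, using $\|\Phi_N\| \geq \|\Phi_N u\|$ for a well-chosen $u$ and $\|\Phi_N\| = \|\Phi_N^{-1}\|$), optimise over $s$: the Chernoff bound gives $\mathbb{P}\{|\frac1N\log\|\Phi_N\| - \gamma| \geq \epsilon\} \leq 2 e^{-s\epsilon N + Cs^2 N}$, minimised at $s = \epsilon/(2C)$ provided $\epsilon/(2C) \leq s_1$, i.e. $\epsilon \leq 2Cs_1$. (iii) For $\epsilon$ in this range the bound is $2e^{-\epsilon^2 N/(4C)}$, which is stronger than the claimed right-hand side; for $\epsilon \in (2Cs_1, 1/e)$ one instead takes $s = s_1$ fixed, obtaining exponential decay $e^{-c' \epsilon N}$, again dominating the stated bound. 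The $\log\frac1\epsilon$ factors in \eqref{eq:largedev} appear because a cleaner, self-improving route replaces step (i)'s smooth-$\lambda(s)$ input with a more elementary truncation argument: split $\log\|T_k\bar u\|$ into a bounded part and a tail part with $\mathbb{E}|\cdot|^\eta$ control, apply Azuma to the bounded part after truncating at level $\sim \log\frac1\epsilon$, and control the tail by a union bound — this produces precisely the $\exp[-c\epsilon^2 N/\log\frac1\epsilon]$ shape with a polylog prefactor, and avoids having to prove analyticity of the transfer operator.

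The main obstacle is step (i): making the exponential-moment / spectral-gap estimate genuinely \emph{uniform} in $E$ over the compact set $|E| \leq R$ and simultaneously valid for a small interval of \emph{negative} $s$ (needed for the lower tail, since $\frac1N\log\|\Phi_N\|$ can a priori be anomalously small without a two-sided bound). Uniformity in $E$ follows from continuity of all the relevant quantities — the law of $T_1(E)$ depends continuously on $E$, Furstenberg's irreducibility/non-compactness hypotheses hold for every real $E$, and the quantitative positivity \eqref{eq:svw} is uniform on compacts — so a compactness argument upgrades pointwise estimates to uniform ones; the negative-$s$ part uses $\|T_1(E)^{-1}\bar u\| \leq C(1+|v_1|)$ exactly as for positive $s$. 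A secondary technical point is that the decomposition $\log\|\Phi_N u\| = \sum \xi_k$ gives a martingale only after subtracting conditional means, and the conditional means themselves must be shown to concentrate near $N\gamma(E)$; this is handled by the contraction estimate \eqref{eq:svw}, which forces the conditional law of the direction $\overline{T_{k}\cdots T_1 u}$ to converge exponentially fast to the stationary measure $\nu_E$, whence $\mathbb{E}[\xi_{k+1} \mid \mathcal{F}_k]$ is within $e^{-ck}$ of $\gamma(E)$. Once these two points are in place, the Chernoff optimisation in steps (ii)--(iii) is routine, and the polylogarithmic corrections are exactly what the truncation bookkeeping produces.
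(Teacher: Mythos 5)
Your proposal follows a genuinely different route from the paper. The paper's proof is a Bernstein-style blocking argument: fix $k=\lceil\log\frac1\epsilon\rceil$, split the product $\prod_{j=1}^N\|T_jx_{j-1}\|$ into $k$ sub-products according to the residue of $j$ mod $k$, and within each residue class replace the true Markov chain of directions $x_{j-1}$ by auxiliary directions $y_{j-1}$ obtained by \emph{restarting the chain from the invariant measure} $k$ steps earlier. This makes $\{y_j:j\in I_r\}$ genuinely independent (the underlying $T$'s are disjoint and the restarts are fresh), each $\log\|T_jy_{j-1}\|$ has mean exactly $\gamma$, and the error $\|x_j-y_j\|$ is controlled via the conditional bound (5.1), giving the $\exp[-c\epsilon^2N/\log\frac1\epsilon]$ shape after exponential Chebyshev on blocks of length $\approx N/\log\frac1\epsilon$ and a union bound over the $k$ residue classes.

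Your route is instead either (A) the spectral-gap / analytic-$\lambda(s)$ approach, or (B) a martingale-plus-truncation (Azuma) argument. Route (A), if carried out, does give a clean bound without the $\log\frac1\epsilon$ corrections — but it requires establishing real-analyticity and a uniform spectral gap for $\mathcal L_s$ on a H\"older space over $\mathbb{P}^1$, which the paper deliberately avoids; the lemma is stated with the log factors precisely because the paper is content with the more elementary blocking proof. So (A), fully done, is a valid but heavier alternative.

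Route (B), however, has a genuine gap in the step that is actually the crux. You write that the contraction estimate (2.7) ``forces the conditional law of the direction $\overline{T_k\cdots T_1u}$ to converge exponentially fast to $\nu_E$, whence $\mathbb{E}[\xi_{k+1}\mid\mathcal F_k]$ is within $e^{-ck}$ of $\gamma(E)$.'' This inference is false. The conditional mean $\mathbb{E}[\xi_{k+1}\mid\mathcal F_k]$ is a deterministic function $f(\bar x_k)=\int\log\|T\bar x_k\|\,d\mu(T)$ of the current direction $\bar x_k$; it is $\mathcal F_k$-measurable and random. Convergence of the \emph{law} of $\bar x_k$ to $\nu_E$ tells you only that $\mathbb{E}[f(\bar x_k)]\to\int f\,d\nu_E=\gamma$, not that $f(\bar x_k)$ itself concentrates near $\gamma$. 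In fact $f$ is a non-constant function on $\mathbb{P}^1$, so $f(\bar x_k)-\gamma$ is an $O(1)$ random variable for all $k$, and the cumulative remainder $\sum_k(f(\bar x_{k-1})-\gamma)$ is a sum of $N$ such terms; bounding it requires either the Poisson-equation/spectral-gap machinery (back to route (A)) or the paper's resetting device, which sidesteps the issue by replacing $\bar x_{j-1}$ with a variable distributed exactly as $\nu_E$ and independent of $T_j$. Without one of these, Azuma gives you concentration of the martingale part but leaves the drift term uncontrolled, so the argument does not close.
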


\paragraph{Spectral radius}
We shall use the following lemma, which is a variant of the results proved by Guivarc'h
\cite[\S 2.4]{G} and by Reddy \cite{R}. We provide a proof in Section~\ref{s:radnormpf}.
For a matrix $\Phi$ we denote by $\rho(\Phi)$ its spectral radius.
\begin{lemma}\label{l:radnorm}
If $\mathbb E|v_1|^\eta < \infty$ for some $\eta > 0$, then for any $R > 0$
there exist $B,B'>0$ and $b,b'>0$ such that
\[ \sup_{|E| \leq R} \mathbb{P} \left\{ \rho(\Phi_N(E)) \leq \delta \|\Phi_N(E)\| \right\}
\leq B \delta^{b} + B' e^{-b' N}~, \quad 0 \leq \delta \leq 1~.\]
\end{lemma}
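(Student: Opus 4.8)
The plan is to compare $\rho(\Phi_N(E))$ with $\|\Phi_N(E)\|$ by splitting the product $\Phi_N = T_N \cdots T_1$ into two halves of length $\sim N/2$, controlling each half and the "gluing". Write $\Phi_N = \Psi_2 \Psi_1$ with $\Psi_1 = T_{\lfloor N/2 \rfloor} \cdots T_1$ and $\Psi_2 = T_N \cdots T_{\lfloor N/2\rfloor + 1}$. The spectral radius satisfies $\rho(\Phi_N) = \lim_k \|\Phi_N^k\|^{1/k}$, but a cleaner route is: for a $2\times 2$ matrix $\Phi \in SL_2(\mathbb R)$ one has $|\tr \Phi| \le 2\rho(\Phi)$ when the eigenvalues are real (hyperbolic case) and $\rho(\Phi) = 1$ otherwise, so in either case $\rho(\Phi) \ge \tfrac12|\tr\Phi|$ is false in general — instead I would use the standard fact that if $\Phi$ has an eigenvalue of modulus $\rho$ then $\rho \ge |\tr \Phi|/2$ holds precisely in the hyperbolic regime, which is the regime we expect when $\|\Phi_N\|$ is large. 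Thus the task reduces to showing $|\tr \Phi_N(E)|$ is comparable to $\|\Phi_N(E)\|$ with high probability.

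Concretely, I would proceed as follows. First, by Lemma~\ref{l:largedev} and positivity of the Lyapunov exponent (the quantitative statement (\ref{eq:svw})), with probability $\ge 1 - B' e^{-b'N}$ we have $\|\Psi_1\|, \|\Psi_2\| \ge e^{(\gamma(E)/2 - o(1)) N}$ and also $\|\Phi_N\| \le \|\Psi_2\|\|\Psi_1\| \le e^{(\gamma(E) + o(1))N}$; so it suffices to show $\rho(\Phi_N) \ge \delta e^{(\gamma(E)+o(1))N}$ fails only with probability $B\delta^b$. Each $\Psi_i$, being a large-norm $SL_2$ matrix, is close to rank one: $\Psi_i = \|\Psi_i\|\, u_i \otimes \hat v_i + O(\|\Psi_i\|^{-1})$ where $u_i$ is the top left singular vector and $\hat v_i$ the top right one. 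Then $\tr(\Psi_2 \Psi_1) = \|\Psi_1\|\|\Psi_2\| \langle \hat v_2, u_1\rangle\langle \hat v_1, u_2\rangle + (\text{lower order})$. So the key point is a lower bound on the two scalar products $|\langle \hat v_2, u_1\rangle|$ and $|\langle \hat v_1, u_2\rangle|$: these must not be too small. Here $u_1$ depends on $T_1,\dots,T_{\lfloor N/2\rfloor}$ while $\hat v_2$ depends on $T_{\lfloor N/2\rfloor+1},\dots,T_N$ — hence $u_1$ and $\hat v_2$ are \emph{independent}, and likewise I would want $\hat v_1$ (depending on the first half) independent of $u_2$ (the second half). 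Wait — $u_2$ is the \emph{left} singular direction of $\Psi_2$, which depends only on the second half, and $\hat v_1$ is the \emph{right} singular direction of $\Psi_1$, depending only on the first half; so indeed both pairs are independent.

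The crux is then: the direction $u_1 \in \mathbb{P}^1$ (resp.\ $\hat v_1$) has a distribution that, conditionally on the first half, is some fixed point, but \emph{marginally} over the first half it is a random point whose law has no atoms — more precisely, by the regularity of the Furstenberg measure (Hölder continuity of the stationary measure $\nu_E$ on $\mathbb{P}^1$, which follows from Le Page's results, and which I expect is exactly why Lemma~\ref{l:lepage} is invoked elsewhere) one has $\sup_{w}\mathbb P\{ |\langle w, u_1\rangle| \le t\} \le C t^\alpha$ for some $\alpha > 0$, uniformly in $|E| \le R$. Applying this with $w = \hat v_2$ (independent of $u_1$) and conditioning gives $\mathbb P\{|\langle \hat v_2, u_1\rangle| \le t\} \le C t^\alpha$, and similarly for the other factor. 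Combining, $\mathbb P\{|\tr \Phi_N| \le \delta \|\Phi_N\|\} \le \mathbb P\{|\langle \hat v_2,u_1\rangle| \le \sqrt\delta\} + \mathbb P\{|\langle \hat v_1,u_2\rangle| \le \sqrt\delta\} + (\text{the exceptional } e^{-b'N} \text{ event}) + (\text{error-term control}) \le B \delta^{b} + B' e^{-b'N}$, as desired.

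The main obstacle I anticipate is twofold: (i) making the "rank-one approximation up to $O(\|\Psi_i\|^{-2})$" precise enough that the lower-order terms in $\tr(\Psi_2\Psi_1)$, and the gap between $\rho$ and $|\tr|/2$, are genuinely negligible against $\delta \|\Phi_N\|$ down to $\delta$ as small as $e^{-cN}$ — but this is fine since the errors are $e^{-\Theta(N)} \|\Phi_N\|$ while we only need the bound down to, effectively, polynomially small $\delta$ before the $e^{-b'N}$ term dominates; and (ii) establishing the uniform (in $E$ and in the test direction $w$) Hölder anticoncentration $\sup_w \mathbb P\{|\langle w, u_N\rangle| \le t\} \le Ct^\alpha$ for the singular direction of $\Phi_N(E)$ — this requires either citing/adapting Le Page's regularity theorem for the invariant measure or deriving it from Lemma~\ref{l:largedev} via the coupling/contraction estimates the paper develops in Sections~\ref{s:rho}--\ref{s:add}. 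I expect the paper handles (ii) precisely through those sections, so here I would quote it as available.
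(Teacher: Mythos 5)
Your proposal is structurally the same as the paper's: split the product in half, exploit independence of the two halves, approximate each half by a rank-one operator, and reduce the claim to an anti-concentration (H\"older regularity) estimate for the relevant singular directions. Two comments on the details.

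First, your aside that ``$\rho(\Phi)\ge\frac12|\tr\Phi|$ is false in general'' is itself false: for any $2\times2$ matrix, $2\rho(\Phi)=2\max(|\lambda_1|,|\lambda_2|)\ge|\lambda_1|+|\lambda_2|\ge|\lambda_1+\lambda_2|=|\tr\Phi|$, with no hyperbolicity assumption needed. So the trace route is clean. The paper instead uses a perturbative linear-algebra lemma (Lemma~\ref{l:linalg} and Corollary~\ref{cor:linalg}), which gives $\rho(\Phi)\ge\frac s2|(VUe_1,e_1)|$ directly from the SVD $\Phi=U\operatorname{diag}(s,s^{-1})V$ once $s$ is large and $|(VUe_1,e_1)|\ge 3/s$; this avoids tracking the several error terms in $\tr(\Psi_2\Psi_1)$ that you acknowledge as obstacle (i), and packages the ``top singular value dominates'' step once and for all. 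Tracing through, the paper's key scalar $(V_NU_Ne_1,e_1)=(U_Ne_1,V_N^*e_1)$ is essentially your $\langle\hat v_1,u_2\rangle$ (the pairing across the splice), after replacing the full-product singular directions by those of the half-products via the exponential Oseledec convergence in Lemma~\ref{l:os}; so your two inner products and the paper's single one are carrying the same information.

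Second, the point you flag as obstacle (ii) is indeed the crux, and you are right that the paper does not take it for granted. It is proved from scratch as Lemma~\ref{l:inv}, and the mechanism is \emph{not} Le Page's regularity of the Furstenberg measure (nor Lemma~\ref{l:lepage}, which concerns the Lyapunov exponent as a function of $E$, not the stationary measure on $\mathbb P^1$): rather it is a harmonic-analysis argument in the spirit of Shubin--Vakilian--Wolff, based on the spectral gap for the unitary $\pi(T)$ of Lemma~\ref{l:norm}, which yields $\sup_w\mathbb P\{|\langle w,V_n^*e_2\rangle|\le\delta\}\lesssim\delta^\kappa$ without any a priori regularity of the potential's distribution. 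Since you explicitly defer this step, your write-up is not a complete proof, but the route you sketch is the right one and matches the paper modulo that ingredient.
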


\paragraph{Bands and gaps}
Consider the transfer matrices corresponding to the potential $(v_j)$ 
 (in this paragraph the potential does not have to be random).
The set of $E$ such that $\rho(\Phi_N(E)) = 1$ consists of $N$ disjoint intervals
(bands); we denote their interiors, numbered from the rightmost to the leftmost, 
$I_1, \cdots, I_N$. Denote
\[ \mathbb R = G_0 \uplus I_1 \uplus G_1 \uplus I_2 \uplus \cdots \uplus I_N \uplus G_N~, \]
where the $G_j$ (the closures of the gaps) are also ordered from right to left.

The eigenvalues of the periodic operator $H_N(0)$ are exactly the points $E$
at which $1$ is an eigenvalue of $\Phi_N(E)$. These are exactly
the edges of the gaps $G_0, G_2, \cdots$
with even indices. This fact admits the following generalisation to the non-Hermitian case (cf.\  \cite{GKh1,GKh2}):

\begin{lemma}[{\cite[Lemma 4.1]{GKh2}}]\label{l:eq}
The eigenvalues of $H_N(g)$ are the points $z \in \mathbb{C}$ such that $e^{Ng}$ is
an eigenvalue of $\Phi_N(z)$. \end{lemma}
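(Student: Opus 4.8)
The statement is a piece of Floquet theory for a non-self-adjoint periodic Jacobi matrix, and the plan is to absorb the $g$-dependence into a twisted boundary condition. First I would conjugate $H_N(g)$ by the diagonal matrix $D=\operatorname{diag}(e^{g},e^{2g},\dots,e^{Ng})$. A direct computation of $D^{-1}H_N(g)D$ shows that its diagonal entries are again $v_1,\dots,v_N$, the entries immediately above and below the diagonal are all equal to $1$, and its only other non-zero entries are the top-right corner, equal to $e^{Ng}$, and the bottom-left corner, equal to $e^{-Ng}$. Consequently $z$ is an eigenvalue of $H_N(g)$ if and only if there is a non-zero $(\phi_j)_{1\le j\le N}$ satisfying $\phi_{j-1}+v_j\phi_j+\phi_{j+1}=z\phi_j$ for $2\le j\le N-1$ together with the two boundary rows, which — upon setting $\phi_0:=e^{Ng}\phi_N$ and $\phi_{N+1}:=e^{-Ng}\phi_1$ — are precisely the same recursion at $j=1$ and $j=N$. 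In other words, extending $(v_j)$ periodically, $z$ is an eigenvalue of $H_N(g)$ exactly when the Schr\"odinger recursion admits a non-zero solution $(\phi_j)_{j\in\mathbb Z}$ with $\phi_{j+N}=e^{-Ng}\phi_j$ for all $j$.

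For the ``only if'' direction I would take an eigenvector $\psi\in\mathbb C^N$ of $H_N(g)$ with eigenvalue $z$, extend it $N$-periodically to $j\in\mathbb Z$, and set $\phi_j:=e^{-gj}\psi_j$. The eigenvalue equation for $H_N(g)$ — including the two corner rows, for which the periodic identifications $\psi_0=\psi_N$, $\psi_{N+1}=\psi_1$ are exactly what is needed — then becomes $\phi_{j-1}+v_j\phi_j+\phi_{j+1}=z\phi_j$ for all $j$, and one also reads off $\phi_{j+N}=e^{-gN}\phi_j$. Hence $w:=\binom{\phi_1}{\phi_0}$ satisfies $\Phi_N(z)w=\binom{\phi_{N+1}}{\phi_N}=e^{-gN}w$, and $w\ne 0$, since otherwise the second-order recursion would force $\phi\equiv 0$ and hence $\psi\equiv 0$. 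Since $\det\Phi_N(z)=\prod_j\det T_j(z)=1$, the two eigenvalues of $\Phi_N(z)$ are mutually reciprocal, so $e^{gN}$ is an eigenvalue of $\Phi_N(z)$ as well.

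For the converse, suppose $e^{gN}$ — equivalently, by the same reciprocity, $e^{-gN}$ — is an eigenvalue of $\Phi_N(z)$, say $\Phi_N(z)w=e^{-gN}w$ with $w=\binom{a}{b}\ne 0$. Set $\binom{\phi_1}{\phi_0}:=w$ and propagate the recursion $\phi_{j+1}=(z-v_j)\phi_j-\phi_{j-1}$ (with $(v_j)$ periodic) to all $j\in\mathbb Z$, so that $\binom{\phi_{j+1}}{\phi_j}=T_j(z)\cdots T_1(z)\,w$; in particular $\binom{\phi_{N+1}}{\phi_N}=e^{-gN}w=e^{-gN}\binom{\phi_1}{\phi_0}$. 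The two sequences $(\phi_{j+N})_j$ and $(e^{-gN}\phi_j)_j$ both solve the recursion and agree at $j=0$ and $j=1$, hence coincide: $\phi_{j+N}=e^{-gN}\phi_j$ for all $j$. Therefore $\psi_j:=e^{gj}\phi_j$ is $N$-periodic and non-zero (its first two coordinates are a non-zero multiple of $w$), and reading the substitution of the first paragraph backwards shows $H_N(g)\psi=z\psi$, so $z$ is an eigenvalue of $H_N(g)$.

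I do not expect any genuine obstacle: the lemma is elementary. The points that require attention are the bookkeeping of the two corner entries under the conjugation by $D$ (equivalently, matching the rows $j=1$ and $j=N$ of the eigenvalue equation to the twisted periodic extension), the non-degeneracy remarks (the relevant vectors are non-zero), and the observation that $e^{gN}$ and $e^{-gN}$ are interchangeable because $\Phi_N(z)\in SL_2$ — so that the substitution, which naturally produces the twist $e^{-gN}$, nevertheless matches the statement as written.
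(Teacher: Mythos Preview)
The paper does not supply a proof of this lemma; it is simply quoted from \cite[Lemma 4.1]{GKh2}. Your argument is correct and is exactly the standard Floquet computation one would expect: the conjugation by $D=\operatorname{diag}(e^{g},\dots,e^{Ng})$ turns the non-Hermitian hopping into unit hopping with a twist $e^{\pm Ng}$ at the corners, and the $SL_2$ reciprocity of the transfer matrix converts the naturally arising eigenvalue $e^{-Ng}$ into the stated $e^{Ng}$. One small wording slip: in the converse direction, $(\psi_1,\psi_0)=(e^{g}a,b)$ is not literally a scalar multiple of $w=(a,b)$, nor are these the ``first two coordinates'' of $\psi\in\mathbb C^N$; but the conclusion $\psi\neq 0$ is of course immediate from $w\neq 0$ since $\psi_1=e^{g}a$ and $\psi_N=\psi_0=b$.
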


\paragraph{H\"older continuity of the Lyapunov exponent}

The local H\"older continuity of the Lyapunov exponent goes back to
the work of Le Page \cite{LePage}. We need the
following version, proved in \cite{CKM} and, by different arguments, in \cite{SVW,B2};
for the sake of unity of argument, we provide a proof in Section~\ref{s:lepage}.
\begin{lemma}[Le Page]\label{l:lepage} If $v_j$ are independent,
identically distributed with $\mathbb{E} |v_1|^\eta < \infty$ for some $\eta > 0$, then the
Lyapunov exponent $\gamma(E)$ associated to the sequence $T_N(E)$ is
 uniformly H\"older continuous on any compact interval.
\end{lemma}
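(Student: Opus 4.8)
The plan is to deduce Hölder continuity of $\gamma(E)$ from the large deviation bound of Lemma~\ref{l:largedev}, following the strategy that yields Hölder continuity from a polynomial rate of convergence. First I would fix a compact interval $[-R,R]$ and take two energies $E, E'$ with $|E-E'|$ small. For a well-chosen scale $N = N(|E-E'|)$ — to be optimised at the end, roughly $N \sim \log\frac{1}{|E-E'|}$ — I would compare $\frac1N \mathbb{E}\log\|\Phi_N(E)\|$ with $\gamma(E)$ and similarly for $E'$; by Lemma~\ref{l:largedev} (integrated over the tail, using that $\log\|\Phi_N\|$ is sandwiched by deterministic bounds coming from $\|T_j\| \le C(1+|v_j|)$ and $\|\Phi_N^{-1}\| = \|\Phi_N\|$ since $\det = 1$), the difference $|\frac1N\mathbb{E}\log\|\Phi_N(E)\| - \gamma(E)|$ is $O(N^{-1/2}\sqrt{\log N})$ or better. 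So it suffices to bound $|\mathbb{E}\log\|\Phi_N(E)\| - \mathbb{E}\log\|\Phi_N(E')\||$ at a \emph{fixed} finite scale $N$.

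The second, deterministic step is the finite-$N$ comparison. Here I would use that $E \mapsto T_j(E)$ is affine, hence $\|T_j(E) - T_j(E')\| = |E-E'|$, and that each $\|T_j(E)\| \le C_R$ for $|E| \le R$. A telescoping estimate for the product gives $\|\Phi_N(E) - \Phi_N(E')\| \le N C_R^{N-1} |E-E'|$. Combined with $\|\Phi_N(E)\|, \|\Phi_N(E')\| \ge 1$ (true for transfer matrices, as $|\det| = 1$ forces the norm to be at least $1$) and the elementary inequality $|\log\|A\| - \log\|B\|| \le \frac{\|A-B\|}{\min(\|A\|,\|B\|)} \le \|A - B\|$ when both norms are $\ge 1$, we get $|\log\|\Phi_N(E)\| - \log\|\Phi_N(E')\|| \le N C_R^{N} |E-E'|$ pointwise, hence after taking expectations. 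Dividing by $N$ yields $|\frac1N\mathbb{E}\log\|\Phi_N(E)\| - \frac1N\mathbb{E}\log\|\Phi_N(E')\|| \le C_R^{N}|E-E'|$.

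Putting the two steps together: for any scale $N$,
\[ |\gamma(E) - \gamma(E')| \le C_R^{N}\,|E-E'| + O\!\left(\sqrt{\tfrac{\log N}{N}}\right)~. \]
Now I optimise over $N$. Choosing $N \asymp \frac{1}{2\log C_R}\log\frac{1}{|E-E'|}$ makes the first term $O(\sqrt{|E-E'|})$, while the second term becomes $O\big(\sqrt{\log\log\frac{1}{|E-E'|}}/\sqrt{\log\frac{1}{|E-E'|}}\big)$, which dominates. This gives continuity with modulus $O\big(1/\sqrt{\log\frac{1}{|E-E'|}}\big)$ — not yet Hölder. To upgrade to genuine Hölder continuity one must do better in the first step: instead of the crude large deviation bound, one should iterate the comparison on dyadic scales, or invoke the stronger fact that the \emph{variance} of $\frac1N\log\|\Phi_N(E)\|$ (or the deviation of its mean from $\gamma(E)$) decays like $e^{-cN}$ on suitable scales — which in turn follows from the spectral-gap/contraction properties of the Markov chain on $\mathbb{P}^1$ induced by the $T_j(E)$, uniformly for $E$ in a compact set. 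The cleanest route is: replace the scale-$N$ error $O(\sqrt{\log N / N})$ by an exponential error $O(e^{-c'N})$ valid when comparing $\frac1N\mathbb{E}\log\|\Phi_N(E)\|$ to $\gamma(E)$ (this is where the uniform positivity/contraction enters, cf.\ the quantitative Furstenberg bound (\ref{eq:svw}) referred to in the text), and then the optimisation over $N \asymp \text{const}\cdot\log\frac{1}{|E-E'|}$ produces a bound $|E-E'|^{\alpha}$ for some $\alpha = \alpha(R) > 0$.

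The main obstacle is precisely this last point: the naive large deviation bound of Lemma~\ref{l:largedev} is too weak (it gives only a logarithmic modulus), and one needs a quantitative, \emph{uniform-in-}$E$ contraction estimate for the random dynamical system on the projective line to control $\big|\frac1N\mathbb{E}\log\|\Phi_N(E)\| - \gamma(E)\big|$ at an exponential rate. Establishing this uniformity — that the spectral gap of the associated transfer operator is bounded away from zero as $E$ ranges over a compact interval — is the technical heart of the proof, and is where the arguments of \cite{CKM} and \cite{SVW,B2} do their real work; it relies on Furstenberg's irreducibility/non-compactness being stable under perturbation of $E$, together with moment bounds on $v_1$.
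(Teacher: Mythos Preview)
Your diagnosis of the central difficulty is correct: combining a finite-$N$ perturbation bound of the form $C_R^{N}|E-E'|$ with an error of size $O(\sqrt{\log N/N})$ from the large-deviation estimate yields only a logarithmic modulus of continuity, not H\"older. You are also right that the missing ingredient is the exponential rate $\big|\gamma(E) - \tfrac1N\mathbb E\log\|\Phi_N(E)\|\big| = O(e^{-cN})$, uniformly on compacts, which comes from a spectral gap for the Markov operator on the projective line. This is indeed where the real work lies in \cite{CKM,SVW,B2}, and your proposal defers that step rather than carrying it out. One technical slip worth noting: you assert $\|T_j(E)\|\le C_R$ deterministically, but under the hypothesis $\mathbb E|v_1|^\eta<\infty$ the potential is unbounded, so the telescoping bound $\|\Phi_N(E)-\Phi_N(E')\|\le N C_R^{N-1}|E-E'|$ fails pointwise and must be replaced by a bound valid on a high-probability event.

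The paper's route differs from yours in an instructive way. Rather than taking expectations, it works on a \emph{single realisation}: by Lemma~\ref{l:largedev} the event $\{|\tfrac1N\log\|\Phi_N(E)\|-\gamma(E)|<\tfrac\delta3\text{ and same at }E'\}$ has probability $\ge 3/4$, and by a Chebyshev/moment argument the event $\{\|\Phi_j(E)\|,\|\Phi_{N,j}(E')\|\le (CR)^{j}\text{ for all }j\}$ also has probability $>3/4$; on any outcome in the (nonempty) intersection one reads off $|\gamma(E)-\gamma(E')|\le\delta$ directly. This device elegantly handles the unbounded-potential issue you glossed over, and avoids integrating tails. However, the paper takes $N=\lfloor\log(1/\delta)\rfloor$, and with $\epsilon=\delta/3\sim e^{-N}$ the bound of Lemma~\ref{l:largedev} gives only $(C\log\tfrac1\delta)\exp(-c\delta^2)$ for the failure probability, which does not make the large-deviation event hold with probability $\ge 3/4$. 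In other words, the obstruction you flagged is present in the paper's argument as literally written: with Lemma~\ref{l:largedev} alone, neither the expectation route nor the single-realisation route upgrades the logarithmic modulus to H\"older. Your proposed remedy---invoking the uniform contraction estimate (cf.\ Lemma~\ref{l:norm} and (\ref{eq:svw})) to obtain exponential convergence of the finite-volume quantity to $\gamma$---is the standard and correct fix.
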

\bigskip
\paragraph{Gaps between the eigenvalues}

\begin{lemma}[Bourgain \cite{B}]\label{l:bourgain}If $v_j$ are independent, identically distributed with $\mathbb{E} |v_1|^\eta < \infty$ for some $\eta > 0$, then for any $R > 0$
there exists $K > 0$ such that
\[ \mathbb{P} \left\{ \min_{\text{$\lambda\neq\lambda'$--eigenvalues of $H_N$ in $[-R, R]$}} |\lambda - \lambda'| < N^{- K}\right\} = 0~.\]
\end{lemma}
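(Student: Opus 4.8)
The plan is to reduce the assertion to an upper bound on the probability that $H_N = H_N(0)$ has two eigenvalues in a short energy window, and then to derive that bound from a Minami-type estimate by a union bound over a fine cover of $[-R,R]$. (Concretely, the probability in the statement will be bounded by a quantity tending to $0$; enlarging $K$ slightly makes this quantity summable in $N$, whence, by Borel--Cantelli, almost surely $\min_{\lambda\ne\lambda'}|\lambda-\lambda'| > N^{-K}$ for all large $N$.)

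Recall that the eigenvalues of $H_N = H_N(0)$ are the roots of $D_N(E) = \det(EI - H_N)$, a monic polynomial of degree $N$ in $E$ whose coefficients are polynomials in $v_1,\dots,v_N$ with integer coefficients (equivalently, by Lemma~\ref{l:eq}, the roots of $\tr\Phi_N(E) - 2$); in particular $D_N$ depends affinely on each individual $v_j$.

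The first ingredient is a Wegner-type bound: for every $R > 0$ there is $C$ such that $\mathbb P\{\sigma(H_N)\cap I \ne \emptyset\} \le CN|I|$ for every interval $I \subseteq [-R-1,R+1]$. When the law of $v_1$ has a bounded density this is Wegner's estimate \cite{Weg}; in general it is available from the Lipschitz (indeed $C^\infty$) regularity of the integrated density of states for uniformly H\"older-continuous distribution functions, established by Bourgain \cite{B3}. The second, and decisive, ingredient is a Minami-type bound: under the same hypotheses, $\mathbb P\{\#(\sigma(H_N)\cap I) \ge 2\} \le CN^2|I|^2$ for every such $I$ -- Minami's estimate \cite{Min} in the bounded-density case, and \cite{B} in the H\"older generality. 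The corner entries of $H_N(0)$ are immaterial here: the rank-two argument only uses the affine dependence of $D_N$ on a single $v_j$ noted above. Granting these two estimates, cover $[-R,R]$ by $O(RN^K)$ intervals of length $2N^{-K}$; any pair of eigenvalues within $N^{-K}$ of one another forces one of these intervals to contain two eigenvalues, and therefore
\[ \mathbb P\Big\{ \min_{\lambda\ne\lambda'} |\lambda-\lambda'| < N^{-K} \Big\} \;\le\; C'R\,N^{2-K}~, \]
which tends to $0$ once $K > 2$, and is summable in $N$ once $K > 3$.

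The main obstacle is the Minami-type estimate in the generality assumed here: for bounded densities it is classical, but under mere H\"older continuity of the distribution function it rests on Bourgain's analysis \cite{B} of the eigenvalue statistics (and on \cite{B3}), and one has to check -- routinely -- that replacing the Dirichlet matrix of \cite{B} by the periodic matrix $H_N(0)$ preserves the structure the argument uses. For the purely \emph{qualitative} statement (almost-sure simplicity of $\sigma(H_N)$) an algebraic remark suffices when the law of $v_1$ is absolutely continuous: the discriminant $\operatorname{disc}_E D_N$ is a polynomial in $v$ that does not vanish identically -- for instance, the periodic Jacobi matrix with $v_j = j$ has simple spectrum -- so its zero set is Lebesgue-null; the quantitative separation, however, genuinely requires the probabilistic input above.
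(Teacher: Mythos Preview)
Your argument via Wegner and Minami estimates is correct \emph{under an additional regularity assumption} on the single-site distribution --- and indeed the paper's second remark after the lemma records exactly this: when the cumulative distribution function of $v_1$ is uniformly H\"older of order $\nu > 1/2$, the conclusion follows from the Minami estimate \cite{Min,CGK,Graf}. But the hypothesis of the lemma is only $\mathbb{E}|v_1|^\eta < \infty$, which allows fully singular laws such as Bernoulli. For those, neither the Wegner bound $\mathbb{P}\{\sigma(H_N)\cap I \ne \emptyset\} \le CN|I|$ nor the Minami bound $\mathbb{P}\{\#(\sigma(H_N)\cap I)\ge 2\} \le CN^2|I|^2$ is available: the rank-one and rank-two perturbation arguments output the H\"older modulus (or density) of the site distribution as a factor, and this blows up at atoms. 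Your own sentence ``in general it is available \ldots\ for uniformly H\"older-continuous distribution functions'' quietly imports an assumption the statement does not make; likewise, \cite{B3} says nothing about Bernoulli, and your appeal to \cite{B} for ``the Minami bound in the H\"older generality'' conflates it with \cite{B3}.

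This is exactly why the lemma is attributed to Bourgain \cite{B} (``eigenvalue spacings for the 1-D Anderson model with \emph{singular} site distribution''): his argument does not pass through a two-level Minami estimate at all. Instead it uses Anderson localisation to compare the eigenvalues of $H_N$ with those of a disjoint union of short independent Dirichlet blocks, and then bounds the probability of a near-coincidence between spectra of \emph{independent} blocks; the latter step only needs the H\"older continuity of the integrated density of states of a single block, which holds without any regularity of $v_1$ (via Le~Page / \cite{CKM,SVW}). So the gap in your proposal is not a technicality but the missing mechanism: to cover singular $v_1$ one has to replace the Minami route by this localisation-based comparison. The paper does not reproduce Bourgain's proof; it merely observes (first remark) that the argument of \cite{B}, written for Dirichlet boundary conditions and Bernoulli potential, carries over to the periodic matrix $H_N(0)$ and to general $v_1$ with a finite moment.
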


\begin{rmk}In the work \cite{B} the lemma is proved for Dirichlet rather than periodic boundary conditions, and only for the case of Bernoulli potential. However, the argument presented there applies equally well in the current setting.
\end{rmk}

\begin{rmk}
The argument in \cite{B} relies on Anderson localisation. On the other hand, if the cumulative distribution function of $v_1$ is uniformly H\"older of order $\nu > 1/2$,
the conclusion of the lemma also follows from the Minami estimate, see \cite{Min}
and further \cite{CGK,Graf}. Thus, for such potentials, the conclusion of Theorem~\ref{thm:realev} is established using fixed-energy arguments only. \end{rmk}

\section{Proof of the main result}

\subsection{The key technical statement}\label{s:key}
Let $c>0$ be a sufficiently small constant, to be chosen later. For a gap $G_j = [a_j, b_j]$, denote
\[ G_j^{+,c} = [b_j - e^{-cN}, b_j]~, \quad  G_j^{-,c} = [a_j, a_j+ e^{-cN}]~. \]
The following proposition provides 
uniform control of the transfer matrices outside exponentially small neighbourhoods of the bands. It is the key ingredient in the proof of the main theorem. Having in mind possible additional applications (in the Hermitian and non-Hermitian setting), we formulate it as an independent statement.
\begin{prop}\label{prop:main}
Let $v_j$ be  i.i.d.\ with $|v_1| \leq A$ for some $A > 0$. Then
for any $\epsilon > 0$
\begin{equation}\label{eq:mainp1}\lim_{N \to \infty} \mathbb{P} \left\{ \forall 1 \leq j < N\,\,\, \max_{E \in G_j} \frac{1}{N} \log \rho(\Phi_N(E)) \geq
\max_{E \in G_j} \gamma(E) - \epsilon \right\} = 1~. \end{equation}
In addition, if $c > 0$ is small enough and $\mathfrak s \in \{+,-\}$,
\begin{equation}\label{eq:mainp2}\lim_{N \to \infty} \mathbb{P} \left\{ \forall 1 \leq j < N\,\,\, \max_{E \in G_j^{\mathfrak s, c}} \frac{1}{N} \log \rho(\Phi_N(E)) \geq
\max_{E \in G_j} \gamma(E) - \epsilon \right\} = 1~. \end{equation}
\end{prop}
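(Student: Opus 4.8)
The plan is to establish \eqref{eq:mainp1} by discretisation and then to transport the resulting bound into the exponentially short windows $G_j^{\mathfrak s,c}$. Throughout, fix $R=A+2$, so that all eigenvalues $\lambda_m(0)$ and all bands and gaps lie in $[-R,R]$; let $K$ be an exponent as in Lemma~\ref{l:bourgain} (applied to periodic \emph{and} to antiperiodic boundary conditions); and recall $\gamma(E)\ge c_0(R)>0$ for $|E|\le R$.

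\emph{Part \eqref{eq:mainp1}.} Take a net $\mathcal E_N\subset[-R,R]$ of mesh $\le N^{-K-1}$, so $\#\mathcal E_N=O(N^{K+1})$. At a fixed energy $E$, Lemma~\ref{l:largedev} (with parameter $\epsilon/10$) gives $\|\Phi_N(E)\|\ge e^{N(\gamma(E)-\epsilon/10)}$, and Lemma~\ref{l:radnorm} with the \emph{exponentially small} choice $\delta=e^{-N\epsilon/10}$ gives $\rho(\Phi_N(E))\ge\delta\|\Phi_N(E)\|$, each off an event of exponentially small probability. A union bound over $\mathcal E_N$ then yields, with probability $\to1$, that $\tfrac1N\log\rho(\Phi_N(E))\ge\gamma(E)-\epsilon/5$ simultaneously for all $E\in\mathcal E_N$ (it is harmless that this forces $\mathcal E_N$ to miss the bands, as the total band length tends to $0$). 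Since, by Lemma~\ref{l:bourgain}, almost surely every gap $G_j$ with $1\le j<N$ has length $\ge N^{-K}$, it contains a net point within $2N^{-K-1}$ of a maximiser of $\gamma$ on $G_j$; Hölder continuity of $\gamma$ (Lemma~\ref{l:lepage}) then compares the value there with $\max_{G_j}\gamma$, giving \eqref{eq:mainp1}.

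\emph{Part \eqref{eq:mainp2}.} Here I would use the Floquet identity $\tr\Phi_N(E)-2=\prod_{m=1}^N(E-\lambda_m(0))$ (and its antiperiodic counterpart) together with the two-sided bound $1+|\tr\Phi_N(E)-2|\le C\rho(\Phi_N(E))\le C(1+|\tr\Phi_N(E)-2|)$ for $E$ in an even gap. Writing an even gap as $G_j=[a_j,b_j]$ with $a_j=\lambda_{p+1}(0),\ b_j=\lambda_p(0)$ — consecutive, hence (Lemma~\ref{l:bourgain}) simple — one factors out the endpoints,
\[ |\tr\Phi_N(E)-2|=(E-a_j)(b_j-E)\prod_{m\ne p,\,p+1}|E-\lambda_m(0)|, \]
and observes that the remaining $\lambda_m(0)$ lie at distance $\ge N^{-K}$ from $[a_j,b_j]$, so the deleted product changes only by a factor $e^{o(1)}$ over subintervals of length $o(N^{-K})$ (a crude dyadic count gives $\sum_{m\ne p,p+1}\operatorname{dist}(\lambda_m(0),[a_j,b_j])^{-1}=O(N^K\log N)$). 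By \eqref{eq:mainp1} there is a net point $E_k\in[b_j-3N^{-K-1},b_j-N^{-K-1}]\subset G_j$ with $\tfrac1N\log\rho(\Phi_N(E_k))\ge\gamma(b_j)-\epsilon/4$ (Hölder again). Evaluating the factorisation at $E_k$ and at $E:=b_j-e^{-cN}\in G_j^{+,c}$: the deleted products and the factors $E-a_j,\ E_k-a_j$ agree up to $e^{o(N)}$, and the only substantial change is $\log(b_j-E)-\log(b_j-E_k)=-cN+O(\log N)$, whence
\[ \tfrac1N\log\rho(\Phi_N(E))\ \ge\ \tfrac1N\log\rho(\Phi_N(E_k))-c-o(1)\ \ge\ \gamma(b_j)-\epsilon/4-c-o(1). \]
Taking $c<\epsilon/8$ gives $\max_{E\in G_j^{+,c}}\tfrac1N\log\rho(\Phi_N(E))\ge\gamma(b_j)-\epsilon/2$, which is $\ge\max_{G_j}\gamma-\epsilon$ once $\operatorname{diam}G_j$ is below an $\epsilon$-dependent threshold, by the Hölder continuity of $\gamma$; the left endpoints $G_j^{-,c}$ are symmetric, and odd gaps are handled by the antiperiodic identity.

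\emph{Main obstacle.} Part \eqref{eq:mainp1} is routine given the lemmata. The work is in \eqref{eq:mainp2}: one needs an \emph{exponential} lower bound on $\rho(\Phi_N)$ inside an \emph{exponentially short} window at a \emph{random} endpoint $b_j$, where $\rho(\Phi_N)$ itself collapses to $1$ and $\log\rho(\Phi_N(E))\to-\infty$ as $E\to b_j$. Both the randomness of $b_j$ (which forbids applying the fixed-energy Lemmata~\ref{l:largedev} and~\ref{l:radnorm} directly at $b_j$) and the logarithmic collapse are controlled by the rigidity of the eigenvalues of $H_N(0)$ (Lemma~\ref{l:bourgain}): it lets one slide the bound of \eqref{eq:mainp1} from a polynomially-close net point into the window at a cost of only $e^{-cN}$ and polynomial factors, which a small $c$ absorbs. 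The one genuinely delicate point is that for a macroscopic gap $\gamma(b_j)$ may be strictly below $\max_{G_j}\gamma$; but it is precisely the endpoint value $\gamma(\lambda_j(0))$ that enters the deduction of Theorem~\ref{thm:realev} in Section~\ref{s:pf}, so this does not affect the application.
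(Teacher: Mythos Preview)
Your Part~\eqref{eq:mainp1} is essentially the paper's argument, only with a polynomial mesh $\#\mathcal E_N=O(N^{K+1})$ in place of an exponential one. (The parenthetical about the net ``missing the bands'' is unnecessary: on the high-probability event the inequality $\tfrac1N\log\rho(\Phi_N(E_l))\ge\gamma(E_l)-\epsilon/5>0$ itself forces every net point into a gap.)

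Your Part~\eqref{eq:mainp2}, however, takes a genuinely different route. You transport the bound from a net point at distance $\sim N^{-K-1}$ into the exponentially short window $G_j^{\mathfrak s,c}$ via the factorisation $\operatorname{tr}\Phi_N(E)-2=\prod_m(E-\lambda_m(0))$, isolating the single factor $(b_j-E)$ responsible for the collapse and controlling the rest by the eigenvalue separation of Lemma~\ref{l:bourgain}. This is correct. The paper, by contrast, observes that since the failure probabilities in Lemmata~\ref{l:largedev} and~\ref{l:radnorm} are \emph{exponentially} small, one can take an \emph{exponential} mesh $(E_l)_{l=1}^{\lfloor e^{cN}\rfloor}$ from the outset (with $c$ chosen below the exponential rates), and the union bound still succeeds. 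Then each window $G_j^{\mathfrak s,c}$ already contains a mesh point, and the same one-line chain of inequalities used for~\eqref{eq:mainp1} applies verbatim---no product formula, no transfer. Your approach makes the degeneration mechanism near the band edge fully explicit and would survive if only polynomial bounds were available in the lemmata; the paper's approach buys brevity by spending the full exponential decay. You also correctly flag that both arguments (the paper's ``Similarly'' included) actually deliver $\gamma(b_j)-\epsilon$ rather than $\max_{G_j}\gamma-\epsilon$ when the gap is macroscopic, and that it is precisely the endpoint value that the deduction of Theorem~\ref{thm:realev} requires.
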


\begin{proof}
As customary, we denote  $\lfloor x \rfloor = \max_{n \leq x} n$ and $\lceil y \rceil
= \min_{n \geq x} n$. 

If $|v_1| \leq A < \infty$ almost surely, then $\| H_N(0) \| \leq 2 + A$.
Let $(E_l)_{l=1}^{\lfloor e^{cN} \rfloor}$ be a sequence of equally spaced points with
$E_1 = -3-A$ and $E_{\lfloor e^{cN} \rfloor} = 3+A$.

The large deviation estimate of Lemma~\ref{l:largedev} allows to bound the norm of the transfer matrix from below, outside an event of small probability. Formally, for any $E_l\in [-3-A,3+A]$, we have
\[
\mathbb P \left\{ \, \|\Phi_N(E_l)\| \leq e^{N(\gamma(E_l) - \frac{\epsilon}4)} \right\}
\le B_1 e^{-b_1 N},
\]
where $B_1$ and $b_1$ do not depend on $E_l\in [-3-A,3+A]$.

In turn,  Lemma~\ref{l:radnorm} allows to compare the spectral radius with the norm: taking $\delta=e^{-N\frac{\epsilon}{4}}$ in the lemma, we obtain for any $E_l$:
\[
\mathbb P \left\{ \, \rho(\Phi_N(E_{l})) \leq e^{-N\frac{\epsilon}{4}} \|\Phi_N(E_l)\| \right\}
\le Be^{-Nb\frac{\epsilon}{4}}+B'e^{-b' N}\le Ce^{-\bar{b} N},
\]
where $C>0$ and $\bar{b}>0$ do not depend on $E_l\in [-3-A,3+A]$.
Hence
\begin{equation*}\begin{split}
&\mathbb P \left\{ \exists l: \, \,  \frac{1}{N} \log \rho(\Phi_N(E_l))  \leq  \gamma(E_l) - \frac{\epsilon}2 \right\} \\
&\quad\leq e^{cN} \left[ B_1 e^{-b_1 N}+ C e^{-\bar{b}N}  \right] \leq C' e^{-c'N}~,
\end{split}\end{equation*}
where we chose $c>0$ small enough ($c < \min(b_1, \bar{b})$). In particular, the probability of the event
\[ \Omega_1 = \left\{ \forall l \, \, \frac{1}{N} \log \rho(\Phi_N(E_l))  \geq  \gamma(E_l) - \frac{\epsilon}2  \right\} \]
tends to $1$ as $N \to \infty$.

Let $m_j \in G_j$ be such that $\max\limits_{E \in G_j} \gamma(E) = \gamma(m_j)$.
Denote
\[ \Omega_2 = \left\{ \forall 1 \leq j < N \,\exists l_j: \, E_{l_j} \in G_j,
|E_{l_j} - m_j| \leq C e^{-cN} \right\}~.\]
By Lemma~\ref{l:bourgain}, no gap is exponentially short, hence each $G_j$
contains at least one $E_l$. Therefore
also the probability of $\Omega_2$ tends to $1$, for $C>0$ large enough.
Then on the event
$\Omega_1 \cap \Omega_2$, for
sufficiently large $N$:
\[\begin{split}
\forall j \quad \max_{E \in G_j} \frac1N \log \rho(\Phi_N(E)) &\geq
\frac1N \log \rho(\Phi_N(E_{l_j}))  \\
&\geq \gamma(E_{l_j})-\frac{\epsilon}2 \geq \gamma(m_j)-\epsilon
= \max_{E \in G_j} \gamma(E) - \epsilon~,
\end{split}\]
where we used the H\"older continuity of the Lyapunov exponent (Lemma~\ref{l:lepage}).
Similarly,
\[ \forall j \quad \max_{E \in G_j^{\mathfrak s, c}} \frac1N \log \rho(\Phi_N(E)) \geq  \max_{E \in G_j} \gamma(E) - \epsilon~.\]
\end{proof}

\subsection{Proof of Theorem~\ref{thm:realev}}\label{s:pf}

Let $\epsilon > 0$. By  equation (\ref{eq:mainp1}) of Proposition~\ref{prop:main},
\[\lim_{N \to \infty} \mathbb P \Big\{ \forall 1 \leq j < N\,\,\, \max_{E \in G_j} \frac1N \log \rho(\Phi_N(E)) \geq
\max_{E \in G_j} \gamma(E) - \epsilon/2 \Big\} = 1~. \]
On this event, we have for any $1 \leq j < N$:
\[ \max_{E \in G_j} \frac1N \log \rho(\Phi_N(E)) \geq\max_{E \in G_j} \gamma(E) - \epsilon/2~.\]
Since $G_{2j} = [\lambda_{2j+1}, \lambda_{2j}]$ (where $\lambda_j = \lambda_j(0)$;
in the notation of Proposition~\ref{prop:main}, $a_{2j} = \lambda_{2j+1}$
and $b_{2j} = \lambda_{2j}$), we conclude the following:
if, for some $g$,
\[\max(\gamma(\lambda_{2j}), \gamma(\lambda_{2j+1})) \geq g + \epsilon~, \]
then
\[ \max_{E \in G_{2j}}\frac1N \log \rho(\Phi_N(E)) \geq  (g+\epsilon/2)~. \]
 On the other hand,
$\frac1N \log \rho(\Phi_N(\lambda_{2j})) = \frac1N \log \rho(\Phi_N(\lambda_{2j+1}))
= 0$, hence by the intermediate
value theorem there are two solutions to $\rho(\Phi_N(E)) = e^{Ng}$ lying in $G_j$.
By Lemma~\ref{l:eq}, these are exactly the eigenvalues $\lambda_{2j}(g)$ and $\lambda_{2j+1}(g)$.
As to the eigenvalues $\lambda_1(g)$ and, for even $N$, $\lambda_N(g)$, these
are real. Invoking the second part (\ref{eq:mainp2}) of Proposition~\ref{prop:main}, we obtain that $|\lambda_j(g) - \lambda_j(0)|$ is exponentially small.
\qed

\section{On the spectral radius of transfer matrices}\label{s:rho}

The ultimate goal of this section is the proof of Lemma~\ref{l:radnorm} in Section~\ref{s:radnormpf}. We start with some auxiliary statements.

Let $T_j  = T_j(E)$ with $\mathbb E |v_j|^\eta< \infty$ for some $\eta > 0$. Then
$\mathbb{E} \|T_j\|^\eta < e^{A\eta}$, where $A>0$ can be chosen locally
uniformly in $E$.
Let $\Phi_N = T_N \cdots T_2 T_1$, and
further let $\Phi_{N, M} = T_N \cdots T_{M+1}$ for $N > M$. We use the singular value
decomposition
\begin{equation}\label{eq:usv} \Phi_N = U_N \left(\begin{array}{cc} s_N & 0 \\ 0 & s_N^{-1} \end{array} \right)V_N~, \quad \Phi_{N,M} = U_{N,M} \left(\begin{array}{cc} s_{N,M} & 0 \\ 0 & s_{N,M}^{-1} \end{array} \right)V_{N,M}~, \end{equation}
where $s_N, s_{N,M} \geq 1$, $U_N,V_N,U_{N,M},V_{N,M} \in SO(2)$. The application
of singular value decomposition in the study of random matrix products goes back at least to
the work of Tutubalin \cite{Tut}, who realised that the sequence $(U_N)$ is approximated by a Markov chain whereas $V_N$ converges to a random limit. This idea plays an important r\^ole in our analysis as well.

\subsection{A lemma in linear algebra}
Denote by $uv^*$ the rank-one operator
taking $w$ to $(w, v)u$, where $(u, v)$ is the inner product. Also, we denote by $e_j$ the $j$-th vector of the standard basis. Although we need the following lemma only for two-dimensional matrices, specialising the argument to this case would only obscure the idea.
\begin{lemma}\label{l:linalg}
If $u, v \in \mathbb C^m$ and $\mathfrak h: \mathbb C^m \to \mathbb C^m$ is a
linear map such that $\|\mathfrak h\| \leq \frac{|(u,v)|^2}{9\|u\|\|v\|}$, then
\begin{equation}\label{eq:linalg}
\rho(uv^* + \mathfrak h) \geq \frac12|(u,v)|~.\end{equation}
\end{lemma}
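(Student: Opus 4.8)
The plan is to show that the rank-one operator $uv^*$ has a single nonzero eigenvalue equal to $(u,v)$ (up to the obvious normalisation), and that adding a perturbation $\mathfrak h$ whose norm is controlled by $\frac{|(u,v)|^2}{9\|u\|\|v\|}$ cannot push the dominant eigenvalue below $\frac12|(u,v)|$. First I would reduce to the normalised situation: writing $\hat u = u/\|u\|$ and $\hat v = v/\|v\|$, we have $uv^* = \|u\|\|v\|\, \hat u \hat v^*$, and $\hat u \hat v^*$ has operator norm $1$, trace $(\hat u, \hat v) =: \tau$ with $|\tau| \le 1$, and a one-dimensional range. Its eigenvalues are therefore $\tau$ together with $m-1$ zeros, so $\rho(uv^*) = \|u\|\|v\|\,|\tau| = |(u,v)|$. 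The hypothesis on $\mathfrak h$ reads $\|\mathfrak h\| \le \frac{\|u\|\|v\|}{9}|\tau|^2$, i.e.\ $\|\mathfrak h\|/(\|u\|\|v\|) \le \frac19|\tau|^2$, so after dividing through by $\|u\|\|v\|$ it suffices to prove: if $\|\mathfrak g\| \le \frac19|\tau|^2$ then $\rho(\hat u\hat v^* + \mathfrak g) \ge \frac12|\tau|$.

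To get a lower bound on the spectral radius of $\hat u\hat v^* + \mathfrak g$ I would exhibit an approximate eigenvector, namely $\hat u$ itself. Compute $(\hat u\hat v^* + \mathfrak g)\hat u = \tau \hat u + \mathfrak g\hat u$. Decompose $\mathfrak g\hat u = \alpha \hat u + r$ with $r \perp \hat u$; then $\|r\|, |\alpha| \le \|\mathfrak g\hat u\| \le \|\mathfrak g\|$. Iterating, one sees that the $2\times 2$ (or rather, $2$-dimensional) invariant picture is governed by the matrix of $\hat u\hat v^* + \mathfrak g$ restricted to, and projected onto, the span of $\hat u$ and the ``escaping'' directions; the cleanest route is to bound $\|(\hat u\hat v^* + \mathfrak g)^n \hat u\|^{1/n}$ from below by peeling off the $\tau\hat u$ term at each step and absorbing the errors, using that $\|\mathfrak g\| \le \frac19|\tau|^2 \le \frac19|\tau|$. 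Concretely, $(\hat u\hat v^* + \mathfrak g)^n\hat u = c_n \hat u + w_n$ where $c_n$ satisfies $c_{n+1} = \tau c_n + O(\|\mathfrak g\|\,\|(\hat u\hat v^* + \mathfrak g)^n\hat u\|)$ and $\|w_{n+1}\| \le \|\mathfrak g\|\,\|(\hat u\hat v^*+\mathfrak g)^n\hat u\|$; a Gr\"onwall-type / generating-function estimate then gives $\limsup_n \|(\hat u\hat v^*+\mathfrak g)^n\hat u\|^{1/n} \ge |\tau| - O(\|\mathfrak g\|) \ge |\tau| - \frac19|\tau| > \frac12|\tau|$, and the spectral radius dominates this limsup (Gelfand's formula, or simply that $\rho(\mathfrak m) \ge \|\mathfrak m^n w\|^{1/n}/\|w\|^{1/n} \to \rho$ fails in general — so one should use $\rho(\mathfrak m) = \lim \|\mathfrak m^n\|^{1/n} \ge \limsup \|\mathfrak m^n \hat u\|^{1/n}$).

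An alternative, and perhaps cleaner, route avoids iteration entirely: use the fact that for any matrix $\mathfrak m$ and any unit vector $x$, every eigenvalue bound can be obtained from the numerical range, but more robustly, one can argue by contradiction via the resolvent. Suppose $\rho(\hat u\hat v^* + \mathfrak g) < \frac12|\tau|$. Then for $|z| = \frac12|\tau|$ the resolvent $(z - \hat u\hat v^* - \mathfrak g)^{-1}$ exists; writing $z - \hat u\hat v^* - \mathfrak g = (z - \mathfrak g)(I - (z-\mathfrak g)^{-1}\hat u\hat v^*)$ and using that $\|(z-\mathfrak g)^{-1}\| \le (|z| - \|\mathfrak g\|)^{-1} = (\frac12|\tau| - \|\mathfrak g\|)^{-1}$, the operator $I - (z-\mathfrak g)^{-1}\hat u\hat v^*$ must be invertible for all such $z$; but this rank-one perturbation of the identity is singular precisely when $(\hat v, (z-\mathfrak g)^{-1}\hat u) = 1$, and for $z = \frac12\tau$ one estimates $(\hat v, (z - \mathfrak g)^{-1}\hat u) = \frac1z(\hat v,\hat u) + (\hat v, [(z-\mathfrak g)^{-1} - z^{-1}]\hat u) = 2 + O(\|\mathfrak g\|/(\frac12|\tau| - \|\mathfrak g\|)\cdot\frac1{|\tau|/2})$, and with $\|\mathfrak g\| \le \frac19|\tau|^2 \le \frac19|\tau|\cdot\frac12|\tau|$ the error term is small enough that this inner product stays away from, well — one wants it to \emph{equal} $1$ for some $z$ with $|z| \le \frac12|\tau|$. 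The function $z \mapsto (\hat v, (z-\mathfrak g)^{-1}\hat u)$ is continuous on $|z| = \frac12|\tau|$ and is close to $\frac2z\cdot\frac{\tau}{2}/$ (hmm, rather close to $z^{-1}\tau$, which has modulus $2$ on that circle), so by a winding-number / intermediate-value argument it takes the value $1$ somewhere inside, producing an eigenvalue of modulus $\le \frac12|\tau|$ — wait, that is the wrong direction. The correct contradiction: $z^{-1}\tau$ has modulus $2 > 1$ on $|z| = \frac12|\tau|$, so by Rouch\'e the equation $(\hat v,(z-\mathfrak g)^{-1}\hat u) = 1$, being a small perturbation, has the same number of solutions inside $|z| < \frac12|\tau|$ as $z^{-1}\tau = 1$, namely one; that solution is an eigenvalue of $\hat u\hat v^* + \mathfrak g$ with modulus $< \frac12|\tau|$ — still the wrong sign. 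I will therefore go with the \emph{first} approach (approximate eigenvector $\hat u$, iteration/Gelfand), which directly gives the lower bound $\rho \ge \frac12|\tau|$ and does not require any winding-number bookkeeping; the only mildly delicate point there is making the error-accumulation estimate in the recursion for $c_n, w_n$ airtight, and I expect the constant $9$ in the hypothesis is exactly what makes $\frac13$-type geometric series converge to leave $\frac12$ rather than $\frac1{\sqrt 2}$ or $1$ on the right-hand side. I anticipate that controlling this geometric accumulation of the perpendicular components — i.e.\ showing they do not conspire to cancel the growing $\tau^n\hat u$ term — is the main obstacle.
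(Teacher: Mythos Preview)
Neither of your two approaches is carried to a conclusion, and the one you commit to has a gap that you yourself flag but do not close.

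In the iteration approach, the recursion $c_{n+1} = \tau c_n + O(\|\mathfrak g\|\,\|x_n\|)$ (with $x_n = (\hat u\hat v^*+\mathfrak g)^n\hat u$) is essentially correct, but to extract a lower bound on $|c_n|$ you need an \emph{upper} bound on $\|x_n\|$, and the only one available a~priori is $\|x_n\|\le(1+\|\mathfrak g\|)^n$, coming from $\|\hat u\hat v^*\|=1$. When $|\tau|<1$ this dominates the $|\tau|^n$ growth you are trying to protect, so the recursion does not close: the error term swallows the main term for large $n$. One could argue that $\rho(\hat u\hat v^*+\mathfrak g)\approx|\tau|$ and hence $\|x_n\|$ is eventually of order $|\tau|^n$ --- but that is exactly the statement to be proved. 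This is a genuine chicken-and-egg obstruction, not just missing bookkeeping, and it is not clear that the constant $9$ (or any constant) rescues this route.

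Your second, abandoned, approach was in fact the right one, with two fixes. First, factor the other way,
\[
\det(z\mathbbm 1 - uv^* - t\mathfrak h)=\det(z\mathbbm 1 - uv^*)\,\det\bigl(\mathbbm 1 - (z\mathbbm 1 - uv^*)^{-1}t\mathfrak h\bigr),
\]
treating $\mathfrak h$ (not $uv^*$) as the small perturbation. Second, take $z$ on the circle of radius $\tfrac12|(u,v)|$ \emph{centred at $(u,v)$}, not at the origin. On that circle the Sherman--Morrison identity
\[
(z\mathbbm 1 - uv^*)^{-1}=\frac1z\,\mathbbm 1+\frac{1}{z\bigl(z-(u,v)\bigr)}\,uv^*
\]
gives $\|(z\mathbbm 1 - uv^*)^{-1}\|\le 8\|u\|\|v\|/|(u,v)|^2$, and the hypothesis on $\|\mathfrak h\|$ yields $\|(z\mathbbm 1 - uv^*)^{-1}t\mathfrak h\|\le 8/9<1$. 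Hence the second determinant never vanishes on the circle; a homotopy in $t\in[0,1]$ then shows that exactly one eigenvalue of $uv^*+\mathfrak h$ lies inside, i.e.\ within $\tfrac12|(u,v)|$ of $(u,v)$, hence of modulus at least $\tfrac12|(u,v)|$. This is the paper's argument; no Rouch\'e or winding-number computation is needed, and the role of the constant $9$ becomes transparent.
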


\begin{proof}
Let $0 \leq t \leq 1$, and let $z$ be a complex number on the circle of
radius $\frac12 |(u,v)|$ about $(u,v)$.
We shall show that for such $t$ and $z$ the determinant
$\det (z - uv^* - t\mathfrak h)$
does not vanish. This will imply that the number of eigenvalues of
$uv^* + t\mathfrak h$ in the disc enclosed by the circle does not change as
$t$ varies from $0$ to $1$. For $t = 0$, the spectrum of $uv^*$
consists of two eigenvalues, $0$ (with multiplicity $m-1$) and $(u,v)$ (with multiplicity $1$), of which the second one
lies in the disk; thus also for $t=1$ there is (exactly) one simple eigenvalue in
the disc, and in particular (\ref{eq:linalg}) holds.

Let us factorise
\[ \det (z\mathbbm{1} - uv^* - t\mathfrak h)
= \det (z\mathbbm{1} -uv^*) \, \det( \mathbbm{1} - (z\mathbbm{1} - uv^*)^{-1} t\mathfrak h)~.\]
The first term is equal to
\[ z^{m-1} (z -(u,v))~, \]
and thus does not vanish on the circle. To show that the second term
does not vanish,  observe that
\[\begin{split} \| (z\mathbbm{1} - uv^*)^{-1}\|
&= \|\frac1z \mathbbm{1} + \frac1z \frac{uv^*}{z - (u,v)}\|
\leq \frac{1}{|z|} \left\{ 1 + \frac{\|u\| \|v\|}{|z -(u,v)|} \right\} \\
&\leq \frac{2}{|(u,v)|} \frac{4\|u\|\|v\|}{|(u,v)|}
= \frac{8\|u\|\|v\|}{|(u,v)|^2}~, \end{split}\]
hence
\[ \| (z \mathbbm{1}- uv^*)^{-1} t\mathfrak h \| \leq \frac{8\|u\|\|v\|}{|(u,v)|^2}  \frac{ |(u,v)|^2}{9\|u\|\|v\|} = \frac89 < 1~.\]
\end{proof}

\begin{cor}\label{cor:linalg} If $\Phi = U \left( \begin{array}{cc} s & 0 \\ 0 & 1/s \end{array}\right) V$
with $U,V \in SO(2)$ and $s \geq 3$, $|(VUe_1, e_1)| \geq \frac 3 s$, then
\[ \rho(\Phi) \geq \frac s 2 |(VUe_1, e_1)|~. \]
\end{cor}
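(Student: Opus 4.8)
The plan is to reduce Corollary~\ref{cor:linalg} to Lemma~\ref{l:linalg} by writing $\Phi$ as a rank-one operator plus a small perturbation. First I would split the diagonal factor as
\[ \begin{pmatrix} s & 0 \\ 0 & 1/s \end{pmatrix} = \begin{pmatrix} s & 0 \\ 0 & 0 \end{pmatrix} + \begin{pmatrix} 0 & 0 \\ 0 & 1/s \end{pmatrix}~, \]
so that $\Phi = s\, U e_1 e_1^* V + \mathfrak h$, where $\mathfrak h = U\,(\text{diag}(0,1/s))\,V$ has norm $1/s$. Now the rank-one part is $s\,Ue_1 (V^* e_1)^* = s\, u w^*$ with $u = Ue_1$ and $w = V^* e_1$, both unit vectors; its single nonzero eigenvalue is $s\,(u,w) = s\,(Ue_1, V^*e_1) = s\,(VUe_1, e_1)$ (using that $V \in SO(2)$ is real orthogonal, so $(Ue_1, V^* e_1) = (VUe_1, e_1)$).

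Next I would check the hypothesis of Lemma~\ref{l:linalg} for the pair $(su, w)$ (or, more cleanly, for $(u,w)$ after pulling the scalar out, noting $\rho(s\,uw^* + \mathfrak h) = \rho$ is what we want directly with the rescaled vector $su$). With $u' = s u$, $\|u'\| = s$, $\|w\| = 1$, and $|(u', w)| = s\,|(VUe_1,e_1)|$, the requirement $\|\mathfrak h\| \le |(u',w)|^2 / (9\|u'\|\|w\|)$ becomes
\[ \frac1s \leq \frac{s^2 |(VUe_1,e_1)|^2}{9 s} = \frac{s|(VUe_1,e_1)|^2}{9}~, \]
i.e. $|(VUe_1,e_1)|^2 \ge 9/s^2$, which is exactly the assumption $|(VUe_1,e_1)| \ge 3/s$. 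Then Lemma~\ref{l:linalg} gives $\rho(\Phi) = \rho(u'w^* + \mathfrak h) \ge \tfrac12 |(u',w)| = \tfrac s2 |(VUe_1,e_1)|$, which is the claim. (The condition $s \ge 3$ is not strictly needed for this deduction but is harmless; it guarantees that the hypothesis $|(VUe_1,e_1)|\ge 3/s$ is not vacuous, since the inner product is at most $1$.)

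There is essentially no obstacle here: the only points requiring a line of care are the identification of the nonzero eigenvalue of the rank-one part as $s(VUe_1,e_1)$ — which uses the reality/orthogonality of $V$ to move it across the inner product — and the bookkeeping of norms when absorbing the factor $s$ into $u$ before applying the lemma. Everything else is direct substitution into Lemma~\ref{l:linalg}.
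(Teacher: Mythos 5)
Your proof is correct and follows essentially the same route as the paper: split $\Phi$ into the rank-one piece $U \left( \begin{smallmatrix} s & 0 \\ 0 & 0 \end{smallmatrix}\right) V$ plus a perturbation $\mathfrak h = U \left( \begin{smallmatrix} 0 & 0 \\ 0 & 1/s \end{smallmatrix}\right) V$ of norm $1/s$, then invoke Lemma~\ref{l:linalg}. You are in fact a little more careful than the printed proof, which sets $u = Ue_1$, $v = V^*e_1$ and states $U \left( \begin{smallmatrix} s & 0 \\ 0 & 0 \end{smallmatrix}\right) V = uv^*$, silently omitting the factor $s$ that your rescaling $u' = sUe_1$ makes explicit — without this factor the lemma's hypothesis $\|\mathfrak h\| \le |(u,v)|^2/(9\|u\|\|v\|)$ would read $|(VUe_1,e_1)| \ge 3/\sqrt{s}$ and the conclusion would lose the factor $s$, so your bookkeeping is exactly what is needed.
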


\begin{proof}
Apply the lemma to $u = Ue_1$, $v = V^* e_1$, observing that
\[ U \left( \begin{array}{cc} s & 0 \\ 0 & 0 \end{array}\right) V = u v^*\]
and 
\[ \| U \left( \begin{array}{cc} 0 & 0 \\ 0 & 1/s \end{array}\right) V \| = 1/s~. \qedhere\]
\end{proof}

Our goal in the remaining part of the section is to prove the estimate
\begin{equation}\label{eq:red1}
 \mathbb P \left\{ |(V_N U_N e_1,e_1)| \leq \delta  \right\}
\leq K\delta^\kappa + e^{-cN}~,
\end{equation}
which will imply Lemma~\ref{l:radnorm}, in view of Corollary~\ref{cor:linalg}.

\subsection{On an important unitary operator}
For each $T \in SL_2(\mathbb R)$, consider the operator $\pi(T): L_2(S^1) \to
L_2(S^1)$, defined via
\[ (\pi(T)f)(x) = f(Tx/\|Tx\|)/\|Tx\|~. \]
For any $T \in SL_2(\mathbb R)$, $\pi(T)$ is unitary.
\begin{lemma}[Shubin--Vakilian--Wolff \cite{SVW}]\label{l:norm} If $v_1$ is not
almost surely equal to a constant, then there exists $a>0$ such that
\begin{equation}\sup_{E} \| (\mathbb E \pi(T_1(E)))^2 \|\leq e^{-a} < 1~.\end{equation}
\end{lemma}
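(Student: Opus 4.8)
The plan is to show that $\mathbb{E}\pi(T_1(E))$ is a contraction in a way that is quantitatively uniform in $E$, using the group structure and the fact that the support of the distribution of $v_1$ is not a single point. First I would identify $S^1$ with the projective line $\mathbb{P}^1(\mathbb{R})$ (the action of $SL_2(\mathbb{R})$ factors through $\pm$, so it is convenient to work with $\mathbb{P}^1$), and note that $\pi(T)$ is a unitary representation of $SL_2(\mathbb{R})$ on $L_2(\mathbb{P}^1)$ with respect to the rotation-invariant measure; this is the standard cocycle/quasi-regular representation, and unitarity is a change-of-variables computation using that the Jacobian of $x\mapsto Tx/\|Tx\|$ is $\|Tx\|^{-2}$ in the natural parametrisation. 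The operator $\mathbb{E}\pi(T_1(E))$ is then an average of unitaries, so it is automatically a contraction, $\|\mathbb{E}\pi(T_1(E))\|\le 1$; the content of the lemma is the strict, uniform bound on the square.

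The key step is to rule out $\|(\mathbb{E}\pi(T_1(E)))^2\|=1$ and to do so with a gap $e^{-a}$ that does not depend on $E$. For a fixed $E$, suppose $\|(\mathbb{E}\pi(T_1(E)))^2\|=1$; by a standard argument for averages of unitaries (convexity/strict convexity of the unit ball of a Hilbert space, applied to the vector $\mathbb{E}\pi(T_1)f$ and then again), this forces the existence of a unit vector $f\in L_2(\mathbb{P}^1)$ and a phase such that $\pi(T_1(E))\,\pi(T_1'(E))f$ equals that phase times $f$ almost surely, where $T_1, T_1'$ are two independent copies; equivalently $\pi\big(T_1(E)T_1'(E)\big)f$ is a.s. a unimodular multiple of $f$. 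Taking absolute values kills the phase and shows $|f|^2\,dx$ is invariant under the group generated by the support of $T_1(E)T_1'(E)$, i.e. under all products $T(E,v)T(E,v')$ with $v,v'$ in the support of the potential. Because $v_1$ is nonconstant, its support contains two distinct points $v\ne v'$, and the matrices $T(E,v)^{-1}T(E,v')=\big(\begin{smallmatrix}1 & 0\\ v-v' & 1\end{smallmatrix}\big)$-type elements (more precisely, the group generated by $\{T(E,v)T(E,v')\}$) act on $\mathbb{P}^1$ without preserving any probability measure — this is precisely the non-compactness/contraction mechanism behind Furstenberg's theorem; one checks directly that this group is unbounded and not contained in a compact-by-... subgroup, so the only invariant measures are supported on at most two points, which cannot be of the form $|f|^2\,dx$ with $f\in L_2$ of the Lebesgue class. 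This contradiction gives $\|(\mathbb{E}\pi(T_1(E)))^2\|<1$ for every fixed $E$.

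The main obstacle is upgrading this pointwise-in-$E$ statement to a bound uniform in $E\in\mathbb{R}$ (the supremum is genuinely over the whole line, as stated). For this I would use a compactness-plus-continuity argument: the map $E\mapsto \mathbb{E}\pi(T_1(E))$ is continuous in the strong operator topology (dominated convergence, since $v\mapsto T(E,v)$ is continuous and the representation is strongly continuous), but strong continuity does not immediately give continuity of the norm of the square. Instead I would control the behaviour as $|E|\to\infty$ separately: after conjugating by the rotation/dilation that normalises $T_j(E)$, the direction $T_j(E)x/\|T_j(E)x\|$ concentrates (uniformly for large $|E|$) near the attracting fixed direction, so the operators $\mathbb{E}\pi(T_1(E))$ become \emph{more} contractive, not less — one gets $\|\mathbb{E}\pi(T_1(E))\|\to 0$, say by estimating the matrix element $(\mathbb{E}\pi(T_1(E))\mathbf{1},\mathbf{1})$ or by a direct Hilbert–Schmidt-type bound on the averaged kernel. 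On the remaining compact interval $|E|\le R$ one uses that the (closed) unit ball of operators is not an issue because the function $E\mapsto \|(\mathbb{E}\pi(T_1(E)))^2\|$ is \emph{upper} semicontinuous (norm is lower semicontinuous in SOT for the adjointed product, but here one can argue via: if $E_n\to E$ and the norms tended to $1$, extract along near-maximisers $f_n$, pass to a weak limit, and reach the same invariant-measure contradiction at the limiting $E$, using that the support of $v_1$ is fixed). Combining the compact-interval bound with the decay at infinity yields a single $a>0$, completing the proof.
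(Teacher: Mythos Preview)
The paper does not supply its own proof of this lemma; it is quoted from \cite{SVW} and used as a black box. So there is nothing to compare against, and the relevant question is whether your outline stands on its own.

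Two substantive gaps:

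\textbf{(1) The norm need not be attained.} Your fixed-$E$ argument begins by assuming there is a unit vector $f$ with $\|(\mathbb{E}\pi(T_1))^2 f\|=\|f\|$, then deduces that $|f|^2\,d\theta$ is an invariant probability measure (absolutely continuous, hence not finitely supported). But $\mathbb{E}\pi(T_1)$ is not compact, and in the lattice case the supremum is genuinely not attained: writing $T(E,v)=N(E-v)J$ with $N(t)=\left(\begin{smallmatrix}1&t\\0&1\end{smallmatrix}\right)$, one checks that $M^\ast M=\mathbb{E}\,\pi\!\left(T(E,v')T(E,v)^{-1}\right)=\mathbb{E}\,\pi(N(v-v'))$, which under the unitary $L_2(\mathbb{P}^1)\cong L_2(\mathbb{R},dz)$ becomes multiplication by $|\mathbb{E}e^{i\xi v_1}|^2$. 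When $v_1$ is lattice-valued this has essential supremum $1$, so $\|M\|=1$ with no maximising vector. Your later ``extract near-maximisers $f_n$ and pass to a weak limit'' does not help as written, since weak $L_2$-limits of unit vectors can vanish. (A repair is possible: take weak-$\ast$ limits of the \emph{measures} $|f_n|^2\,d\theta$ on the compact space $\mathbb{P}^1$ and show the limit is invariant; then invoke Furstenberg. But this is not what you wrote, and the invariance step needs care.)

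\textbf{(2) The large-$|E|$ claim is false.} You assert that $\|\mathbb{E}\pi(T_1(E))\|\to 0$ as $|E|\to\infty$. The computation above shows that $M^\ast M$ is \emph{independent of $E$}, hence so is $\|M\|$; in the lattice case $\|M(E)\|\equiv 1$ for every $E$. So the proposed mechanism for handling the non-compact range of $E$ does not work, and the compactness-plus-continuity scheme collapses. Incidentally, this same $E$-independence of $M^\ast M$ is exactly what makes the uniformity in $E$ tractable in the actual SVW argument; the subtlety (and the reason the lemma is stated for $M^2$ rather than $M$) is that when $\|M\|=1$ one must exploit that near-maximisers for $M$ concentrate, in the $z$-Fourier picture, near a discrete set which is not preserved by $\pi(J)$.
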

\noindent Denoting by $\mathbf 1 \in L_2(S^1)$ the function identically equal to $1$ and 
parametrising the points on the circle by an argument $\theta \in [0, 2\pi]$,
we obtain:
\begin{equation}\label{eq:svw}\mathbb{E} \|\Phi_n(E)\|^{-1} \leq \mathbb{E} \int_0^{2\pi} \frac{d\theta}{2\pi} \| \Phi_n(E) e^{i \theta}\|^{-1} = \frac{1}{2\pi}
(\mathbb E \pi(\Phi_n) \mathbf 1, \mathbf 1)  \leq e^{-a\lfloor \frac n 2 \rfloor}~, \end{equation}
and in particular $\gamma \geq a/2$.

\medskip
Next, by the Oseledec multiplicative ergodic theorem \cite{Os}, $V_N$ converges almost
surely to a random limit $V$. We use this fact in the following
form:
\begin{lemma}\label{l:os} Suppose $\mathbb{E} |v_1|^\eta < \infty$ for some $\eta > 0$.
Then, for any $R > 0$,
\[ \sup_{|E| \leq R} \mathbb{E} \|V_N(E) - V(E)\| \leq B e^{-bN}~. \]
\end{lemma}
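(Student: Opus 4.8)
The statement asserts an exponential rate of convergence, in expectation, for the ``right'' unitary factor $V_N(E)$ of the singular value decomposition of $\Phi_N(E)$, uniformly on compacts. The plan is to identify $V_N(E)e_1$ with the most contracted direction of $\Phi_N(E)$ (equivalently, the most expanded direction of $\Phi_N(E)^*$, read backwards), and to show that this direction stabilises exponentially fast as $N$ grows. Concretely, write $\Phi_{N}=\Phi_{N,M}\,\Phi_M$ with $M=\lfloor N/2\rfloor$, say. The line $\mathbb R V_N^*e_2$ (the most contracted direction of $\Phi_N$) is the image under $\Phi_M^{-1}$ of the most contracted direction of $\Phi_{N,M}$ relative to the new ``initial condition'', and by the contraction property of the projective action of hyperbolic matrices, the most contracted direction of $\Phi_N$ and of $\Phi_{N',N'-N+M}$ (two long products sharing the same first $M$ factors) differ by at most a factor governed by the ratio of singular values $s_{N,M}^{-2}$, times an $\|\Phi_M\|^2$-type amplification. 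This is the standard ``$e^{-2\gamma M}$ beats polynomial amplification'' mechanism.

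The key steps, in order. \emph{(i)} First I would reduce the claim to a Cauchy-type estimate: it suffices to prove $\sup_{|E|\le R}\mathbb E\|V_{N}(E)-V_{N+1}(E)\|\le Be^{-bN}$ (and then sum the geometric series, using that $V_N\to V$ a.s.\ by Oseledec~\cite{Os} to identify the limit). \emph{(ii)} Express $V_{N}^*e_2$ as a function of $\Phi_N$ and compare $V_N^*e_2$ with $V_{N+1}^*e_2$; using the cocycle identity $\Phi_{N+1}=T_{N+1}\Phi_N$ and the fact that $T_{N+1}$ is a fixed ($E$-dependent but bounded) $SL_2$ matrix, the two most-contracted directions of $\Phi_N$ and $T_{N+1}\Phi_N$ are related, and their projective distance is controlled by $s_N^{-2}$ up to a bounded multiplicative constant depending on $\|T_{N+1}\|\le 2+|E|$ — more precisely, adding one matrix to a product with singular-value gap $s_N^2$ can rotate the contracted direction by at most $O(\|T_{N+1}\|^2 s_N^{-2})$. \emph{(iii)} Hence $\|V_N-V_{N+1}\|\lesssim (2+|E|)^2\, s_N(E)^{-2}=(2+|E|)^2\|\Phi_N(E)\|^{-2}$. \emph{(iv)} Take expectations and invoke the Shubin--Vakilian--Wolff bound~\eqref{eq:svw}, which gives $\mathbb E\|\Phi_N(E)\|^{-1}\le e^{-a\lfloor N/2\rfloor}$ uniformly in $E$; by Cauchy--Schwarz (or directly, since $\|\Phi_N\|^{-2}\le\|\Phi_N\|^{-1}$ as $\|\Phi_N\|\ge1$) one gets $\mathbb E\|\Phi_N(E)\|^{-2}\le e^{-a\lfloor N/2\rfloor}$, so $\sup_{|E|\le R}\mathbb E\|V_N-V_{N+1}\|\le B' e^{-bN}$ with $b=a/4$, say. \emph{(v)} Sum: $\mathbb E\|V_N-V\|\le\sum_{m\ge N}\mathbb E\|V_m-V_{m+1}\|\le Be^{-bN}$.

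The main obstacle is step~\emph{(ii)}: making rigorous and quantitative the claim that prepending a bounded matrix to a product with a large singular-value gap perturbs the most-contracted direction by only $O(\|T_{N+1}\|^2/s_N^2)$. One clean way is to work with the action on the projective line: the most contracted direction of $\Phi_N$ is the attracting fixed point of $\Phi_N^{-*}$ acting projectively (or of $\Phi_N^*$ acting on the orthogonal direction), and $\Phi_N$ contracts the complement of an $O(s_N^{-2})$-neighbourhood of its expanding direction into an $O(s_N^{-2})$-neighbourhood of that direction; feeding this through one more bounded factor and comparing fixed points via a contraction-mapping/triangle-inequality argument yields the bound, with the amplification $\|T_{N+1}\|^2$ coming from the bi-Lipschitz constant of the projective action of $T_{N+1}$. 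Care is needed because the bound degenerates if $s_N$ is not large, but this occurs only on an event of exponentially small probability (again by~\eqref{eq:svw}, since $s_N=\|\Phi_N\|$), which is absorbed into the $Be^{-bN}$ term after using the a priori bound $\|V_N-V\|\le 2$ (the operators are in $SO(2)$, hence of norm $1$).
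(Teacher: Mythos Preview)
Your overall strategy matches the paper's: bound $\mathbb{E}\|V_N-V_{N-1}\|$ by something exponentially small via the singular-value gap, then sum. The paper's execution of step~(ii) is more direct than your projective-geometry sketch: writing $V_N^*e_2=\cos\theta\,V_{N-1}^*e_2+\sin\theta\,V_{N-1}^*e_1$ and computing
\[
\|\Phi_N\|^{-1}=\|T_N\Phi_{N-1}V_N^*e_2\|\ge\|T_N\|^{-1}\bigl[\cos^2\theta\,\|\Phi_{N-1}\|^{-2}+\sin^2\theta\,\|\Phi_{N-1}\|^{2}\bigr]^{1/2}\ge|\sin\theta|\,\|T_N\|^{-1}\|\Phi_{N-1}\|
\]
gives $|\sin\theta|\le\|T_N\|/(\|\Phi_N\|\,\|\Phi_{N-1}\|)$ in one line, with no need for contraction-mapping arguments.

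There is, however, a genuine gap in your step~(ii)--(iii): you write ``$\|T_{N+1}\|\le 2+|E|$'', but this is false. The transfer matrix $T_{N+1}(E)$ depends on the random potential $v_{N+1}$, and under the sole hypothesis $\mathbb{E}|v_1|^\eta<\infty$ the norm $\|T_{N+1}\|$ is unbounded; in particular $\mathbb{E}\|T_{N+1}\|^2$ need not be finite, so your bound $\mathbb{E}\|V_N-V_{N+1}\|\lesssim\mathbb{E}\bigl[\|T_{N+1}\|^2\|\Phi_N\|^{-2}\bigr]$ cannot be completed as written. The paper fixes this by first using $|\sin\theta|\le 1$ to pass to $|\sin\theta|^\eta$ (since $\eta\le 1$), so that only the finite $\eta$-th moment of $\|T_N\|$ enters; then independence of $T_N$ and $\Phi_{N-1}$, together with Jensen's inequality $\mathbb{E}\|\Phi_{N-1}\|^{-\eta}\le(\mathbb{E}\|\Phi_{N-1}\|^{-1})^\eta$ and the SVW bound~\eqref{eq:svw}, closes the estimate. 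You should insert this step; otherwise the argument only works under the stronger hypothesis that $v_1$ is bounded (which is the standing assumption of Theorem~\ref{thm:realev}, but not of Lemma~\ref{l:os}).
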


\begin{proof}
Let $V_N^* e_2 = \cos \theta \, V_{N-1}^* e_2 +  \sin \theta \, V_{N-1}^* e_1$. Then
\[\begin{split}
\|\Phi_N\|^{-1}  &= \| \Phi_{N} V_N^* e_2\| = \| T_N \Phi_{N-1} V_N^* e_2\| \geq
\|T_N\|^{-1} \| \Phi_{N-1} V_N^* e_2 \| \\
&\geq \|T_N\|^{-1} \left[ \cos^2 \theta \| \Phi_{N-1}\|^{-2} + \sin^2 \theta \|\Phi_{N-1}\|^2\right]^{1/2} \\
&\geq |\sin \theta| \, \|T_N\|^{-1}  \|\Phi_{N-1}\|~,\end{split}\]
whence
\[ |\sin \theta| \leq \frac{\|T_N\|}{\|\Phi_N\| \, \|\Phi_{N-1}\|}~, \]
and, using (\ref{eq:svw}), 
\[ \mathbb E  |\sin \theta|  \leq \mathbb E |\sin \theta|^\eta \leq e^{A(R)\eta} \, B_0 e^{-bN}~, \]
Thus an estimate of the same form holds for $\|V_N - V_{N-1}\|$ in place of
$|\sin \theta|$.
\end{proof}

\subsection{Conclusion of the proof of Lemma~\ref{l:radnorm}}\label{s:radnormpf}

As before, we denote  $\lfloor x \rfloor = \max_{n \leq x} n$ and $\lceil y \rceil
= \min_{n \geq x} n$. 
In the notation of (\ref{eq:usv}), Lemma~\ref{l:os} applied to the matrix products
$T_N \cdots T_1$ and $T_1^* T_2^* \cdots T_N^*$ implies that for $N \geq C \log A$
\begin{equation}\label{eq:os}\mathbb{E} \| V_N - V_{\lfloor N/2\rfloor}\| \leq  e^{-b'N}~,
\quad
\mathbb{E} \| U_N - U_{N,\lceil N/2\rceil}\| \leq e^{-b'N}~.
\end{equation}
The matrices $V_{\lfloor N/2\rfloor}, U_{N,\lceil N/2\rceil}$ are independent, therefore
by an additional application of Lemma~\ref{l:os}
\begin{equation}\label{eq:os'}\mathbb{E} \| V_N - \tilde V\| \leq   e^{-b'N}~,
\quad
\mathbb{E} \| U_N - \tilde U\| \leq   e^{-b'N}~,
\end{equation}
where $\tilde U, \tilde V \in SO(2)$ are independent random matrices sampled from the corresponding limiting distributions. To conclude the proof of Lemma~\ref{l:radnorm}, we state (and prove)

\begin{lemma}\label{l:inv}
Assume that $\mathbb E |v_1|^\eta < \infty$, and that $v_1$ is not almost surely constant.
Then for any $E$ there exist $K>0$ and $\kappa>0$ such that for any $0 \leq \delta \leq 1$
and $w, w' \in S^1$
\begin{equation}\label{eq:reg}
\mathbb{P} \left\{ |(U(E)w, w')| \leq \delta \right\} \leq K \, \delta^{\kappa}~, \quad \mathbb{P} \left\{ |(V(E)w, w')| \leq \delta \right\} \leq K \, \delta^{\kappa}~.
\end{equation}
The numbers $K$ and $\kappa$ may be chosen
locally uniformly in $E$.
\end{lemma}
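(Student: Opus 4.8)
The plan is to establish the anti-concentration bound \eqref{eq:reg} for the limiting matrices $U(E)$ and $V(E)$ by transferring a regularity statement about the associated Furstenberg measure on $S^1$, and then feeding in the unitarity estimate of Lemma~\ref{l:norm} (Shubin--Vakilian--Wolff) together with the large-deviation bound of Lemma~\ref{l:largedev}. By the symmetry between $U$ and $V$ (the latter being the analogous object for the transposed product $T_1^* T_2^* \cdots$), it suffices to treat, say, $V(E)$. Recall that $V_N^* e_2$ is the most contracted direction of $\Phi_N$, and $V_N \to V$ almost surely; write $\nu_E$ for the distribution of the random unit vector $V(E)^* e_2$ on the projective line $S^1/\pm$. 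The point $(V(E) w, w')$ vanishes precisely when $V(E)^* w'$ is orthogonal to $w$, i.e. when the random direction $V(E)^*e_2$ (which determines $V(E)$ up to the choice of the companion vector) lies in a fixed small arc; so the estimate \eqref{eq:reg} is equivalent to the H\"older regularity of $\nu_E$: $\nu_E(\text{arc of length }\delta) \le K\delta^\kappa$, uniformly over arcs and locally uniformly in $E$.

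First I would make the reduction precise: parametrising $V(E) \in SO(2)$ by a single angle $\phi$ and writing $w, w'$ by their arguments, $|(V(E)w, w')| = |\sin(\phi - \psi_0)|$ for some fixed $\psi_0 = \psi_0(w,w')$, so the event in \eqref{eq:reg} is the event that $\phi$ falls in a union of two arcs of total length $O(\delta)$. Hence everything comes down to showing $\sup_{\text{arc }J, |J| = \delta}\mathbb{P}\{\phi \in J\} \le K\delta^\kappa$. Second, I would obtain this from the operator $\pi(T_1(E))$. For a test arc $J$ of length $\delta$, choose a smooth bump $f_J \in L_2(S^1)$ with $f_J \ge \mathbbm{1}_J$, $\|f_J\|_\infty \lesssim 1$, and $\|f_J\|_{L_2} \lesssim \delta^{1/2}$; the stationarity of the Furstenberg measure under the Markov operator $\mathbb{E}\,\pi(T_1(E))$ gives $\mathbb{E}\langle \pi(\Phi_n(E))f_J, \mathbf{1}\rangle = \langle (\mathbb{E}\pi(T_1(E)))^n f_J, \mathbf{1}\rangle$, and iterating Lemma~\ref{l:norm} this is at most $e^{-an/2}\|f_J\|_{L_2}\|\mathbf{1}\|_{L_2} \lesssim e^{-an/2}\delta^{1/2}$. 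On the other hand, the left-hand side can be bounded below, on the large-probability event of Lemma~\ref{l:largedev} where $\|\Phi_n(E)\|$ is exponentially large and hence $V_n$ is within $e^{-b'n}$ of $V$ by Lemma~\ref{l:os}, by (a constant times) $\mathbb{P}\{\phi \in J'\}$ for a slightly shrunk arc $J'$, minus the failure probability $\le B_1 e^{-b_1 n}$. Balancing by choosing $n$ proportional to $\log(1/\delta)$ with a small enough proportionality constant yields $\mathbb{P}\{\phi \in J\} \le K\delta^\kappa$ for some $\kappa > 0$, which is \eqref{eq:reg}.

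Finally, to conclude Lemma~\ref{l:radnorm} (as promised in the text), I would combine \eqref{eq:reg} for $U$ and $V$ with \eqref{eq:os'}: since $\mathbb{E}\|V_N - \tilde V\| + \mathbb{E}\|U_N - \tilde U\| \le 2e^{-b'N}$ with $\tilde U, \tilde V$ \emph{independent} and distributed as $U(E), V(E)$, Chebyshev gives $\|V_N - \tilde V\|, \|U_N - \tilde U\| \le \delta$ off an event of probability $\le 2\delta^{-1}e^{-b'N}$. On the complement, $|(V_N U_N e_1, e_1)|$ differs from $|(\tilde V \tilde U e_1, e_1)|$ by $O(\delta)$, and conditioning on $\tilde U$ (so that $\tilde U e_1$ is a fixed unit vector) and applying the $V$-part of \eqref{eq:reg} gives $\mathbb{P}\{|(\tilde V \tilde U e_1, e_1)| \le 2\delta\} \le K(2\delta)^\kappa$. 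Optimising the split between $\delta^\kappa$ and $\delta^{-1}e^{-b'N}$ (e.g. $\delta$ at the scale $e^{-b'N/2}$ when the target $\delta$ is smaller, otherwise directly) produces \eqref{eq:red1}, and then Corollary~\ref{cor:linalg} — applied with $s = s_N = \|\Phi_N(E)\| \ge 3$ on the exponentially-likely event of Lemma~\ref{l:largedev} where $s_N$ is exponentially large, so that the hypothesis $|(V_N U_N e_1, e_1)| \ge 3/s_N$ holds with overwhelming probability — turns the lower bound on $|(V_N U_N e_1, e_1)|$ into the lower bound $\rho(\Phi_N(E)) \ge \tfrac{s_N}{2}|(V_N U_N e_1, e_1)|$, yielding Lemma~\ref{l:radnorm}.

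I expect the main obstacle to be the first, core step: producing the smooth majorant $f_J$ and making the two-sided comparison between $\mathbb{E}\langle\pi(\Phi_n)f_J,\mathbf{1}\rangle$ and $\mathbb{P}\{\phi\in J\}$ genuinely uniform — both over all arcs $J$ (including the dependence of the implied constants on $\|f_J\|_{L_2}\asymp\delta^{1/2}$ versus $\|f_J\|_\infty\asymp 1$, which is what forces the contraction factor $e^{-an/2}$ to have to beat $\delta^{1/2}$ rather than $\delta$) and locally uniformly in $E$, the latter requiring that the constant $a$ in Lemma~\ref{l:norm}, the constants in Lemma~\ref{l:largedev}, and the rate $b'$ in Lemma~\ref{l:os} can all be taken uniform on compacts. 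The bookkeeping in the final optimisation (choosing $n \sim \eta'\log(1/\delta)$ with $\eta'$ small, and then matching this against the $e^{-b'N}$ errors) is routine once the uniform core estimate is in hand.
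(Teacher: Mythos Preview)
Your overall strategy --- reduce to H\"older regularity of the law of the contracting direction, pair a test function with $\mathbf{1}$, use the contraction of $\mathbb{E}\,\pi(T_1)$ for the upper bound, and balance $n$ against $\log(1/\delta)$ --- is exactly the paper's. The final paragraph (deducing \eqref{eq:red1} and then Lemma~\ref{l:radnorm} via Corollary~\ref{cor:linalg}) is also essentially identical to the paper's argument. However, the core step, your lower bound, does not go through as stated, and the reason is structural rather than cosmetic.

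First a minor point: the pairing $\langle\pi(\Phi_n)f_J,\mathbf 1\rangle$ equals (after unitarity) $\int f_J(y)\,\|\Phi_n^{-1}y\|^{-1}\,dy$, which is peaked where $y$ is near $U_n e_1$, not near $V_n^*e_2$; so as written you would be probing the distribution of $U$, not of $V$. This is curable by swapping the arguments (work with $\langle\pi(\Phi_n)\mathbf 1,f_J\rangle=\int_J\|\Phi_n e^{i\theta}\|^{-1}\,d\theta$, which is what the paper uses).

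The real gap is the claim that the expectation is bounded below by ``a constant times $\mathbb{P}\{\phi\in J'\}$''. On the event $\{V_n^*e_2\in J,\ \|\Phi_n\|=s_n\}$ one has
\[
\int_J \|\Phi_n e^{i\theta}\|^{-1}\,d\theta \ \asymp\ \min(\delta s_n,\ s_n^{-1}\log(\delta s_n^2))\ \le\ C\,s_n^{-1}\log(1/\delta),
\]
so after taking expectation your lower bound is at best $\mathbb{E}[\,\|\Phi_n\|^{-1}\mathbf 1_{\{V_n^*e_2\in J\}}]$, not $\mathbb{P}\{V_n^*e_2\in J\}$. Comparing with the upper bound $e^{-an/2}\sqrt{\delta}$ you would need $\gamma<a/2$, whereas \eqref{eq:svw} gives $\gamma\ge a/2$; the two exponentials exactly cancel (or worse), and no power of $\delta$ survives. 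The large-deviation event from Lemma~\ref{l:largedev} does not help here, since it only pins $\|\Phi_n\|$ near $e^{\gamma n}$ and thus produces the same unfavourable factor.

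The paper resolves this by a pigeonhole over dyadic windows $\|\Phi_n\|\in[e^{man/16},e^{(m+1)an/16}]$ and, crucially, by \emph{enlarging} the cap from $\delta$ to $\tau=\max(\delta,e^{-(m+1)an/16})$ at each scale. With this enlarged cap the lower bound becomes $\gtrsim \tau\,\|\Phi_n\|$ (or $\|\Phi_n\|^{-1}$), while the upper bound only worsens to $e^{-an/2}\sqrt{\tau}$; now the factors of $\tau$ are mismatched ($\tau$ versus $\sqrt\tau$), which is what produces the contradiction. Your sketch has no analogue of this scale-adapted enlargement, and without it the two sides of your inequality are of comparable size.
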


\begin{proof}[Proof of Lemma~\ref{l:radnorm}]
Applying Lemma~\ref{l:inv} with $\tilde U$ and $\tilde V$ in place of $U$ and $V$, we obtain:
\begin{equation}\label{eq:invmes}
 \mathbb P \left\{ |(\tilde V \tilde Ue_1, e_1) | \leq \delta  \right\}
\leq K \, \delta^{\kappa}~.
\end{equation}
Together with (\ref{eq:os'}), this implies (\ref{eq:red1}), which by Corollary~\ref{cor:linalg}
implies the conclusion of Lemma~\ref{l:radnorm}. \end{proof}

\begin{proof}[Proof of Lemma~\ref{l:inv}]
Since we do not keep track on the dependence of the constants on $a$ (from
Lemma~\ref{l:norm}) and $\eta$, we
may assume that $a < 4\eta$. Let $\mathcal C_\delta \subset S^1$ be a cap of angular size $\delta$. Assume: $\mathbb{E} \|T_1\|^\eta \leq e^{A\eta}$, and denote $n = \lfloor \log \frac {2(A+1)}{\delta} \rfloor$. Let
us show that
\begin{equation}\label{eq:goalinv}
\mathbb P \left\{ V_n^* e_2  \in \mathcal C_\delta \right\} \leq C \lceil \log \frac{16(A+1)}a \rceil e^{-\frac{an}{16}}~.
\end{equation}
By an additional application of Lemma~\ref{l:os}, this implies (\ref{eq:reg}).

\noindent To prove (\ref{eq:goalinv}), we start with the estimates
\[\mathbb{E} \|\Phi_n\|^\eta \leq e^{An\eta}~, \quad
\mathbb{E} \|\Phi_n\|^{-1} \leq (\mathbb E \pi(\Phi_n) \mathbf 1, \mathbf 1)
\leq 2\pi e^{-\frac {an}{2}}~,\]
which imply that
\[ \mathbb{P} \left\{ e^{\frac{an}{4}} \leq \|\Phi_n\|\leq e^{(A+1)n} \right\} \geq 1 - C_1 e^{-\frac{an}{4}}~.\]
If (\ref{eq:goalinv}) fails, there exists $4 \leq m \leq \lceil \log \frac{16(A+1)}{a} \rceil$
such that
\begin{equation}\label{eq:linv5}\mathbb P \left\{ e^{\frac{ma}{16} n} \leq \| \Phi_n \| \leq e^{\frac{(m+1)a}{16}n}~, \,\,\,
V_n^* e_2 \in \mathcal C_\delta \right\} \geq C e^{-\frac{an}{16}}~.\end{equation}
Let $\tau = \max(\delta, e^{- \frac{(m+1)a}{16}n})$, let $\mathcal C_{\tau} \supset \mathcal C_\delta$ be a cap of size $\tau$, and let $\mathbf 1_{\tau}$ be the indicator of $\mathcal C_{\tau}$. Then (still treating $e^{i\theta}$ as a vector on the circle)
\begin{equation}\label{eq:linv4}
 \mathbb E \int_{\mathcal C_{\tau}} \|\Phi_n e^{i\theta}\|^{-1} d\theta
 =(\mathbb E \pi(\Phi_n) \mathbf 1, \mathbf 1)
 \leq C_2 e^{-\frac{an}{2}} \sqrt{\tau}~.
\end{equation}
On the other hand, if
\begin{equation}\label{eq:linv2}
V_n^* e_2 \in \mathcal C_\tau~, \,\,\, e^{\frac{ma}{16} n} \leq \| \Phi_n \| \leq e^{\frac{(m+1)a}{16}n}~,
\end{equation}
then
\begin{equation}\label{eq:linv1}
\int_{\mathcal C_{\tau}} \|\Phi_n e^{i\theta}\|^{-1} d\theta
\geq \frac{1}{C_3} \min(\tau, \|\Phi_n\|^{-2}) \|\Phi_n\| \geq \frac{1}{C_3} \tau e^{\frac{(m-1)a}{16}n}~,
\end{equation}
therefore by (\ref{eq:linv5})
\begin{equation}\label{eq:linv3}\begin{split}
\mathbb \int_{\mathcal C_{\tau}} \|\Phi_n e^{i\theta}\|^{-1} d\theta
&\geq \frac{1}{C_3} \tau \|\Phi_n\| \geq \frac{1}{C_3} \tau e^{\frac{(m-1)a}{16}n}
\,\,  \mathbb P(\text{the event (\ref{eq:linv2})}) \\
& \geq \frac{1}{C_3} \tau e^{\frac{(m-1)a}{16}n}  Ce^{-\frac{an}{16}}~.
\end{split}\end{equation}
If we choose $C > C_2 C_3$, the juxtaposition of   (\ref{eq:linv3}) with (\ref{eq:linv4})
leads to
\[ e^{- \frac{(m+1) a}{16}n}\leq \sqrt \tau \leq e^{-\frac{(m-1)a}{16}n} e^{\frac{an}{16}} e^{-\frac{an}{2}}~,  \]
which is a contradiction.
\end{proof}

\section{Proofs of the additional lemmata}\label{s:add}

\subsection{Large deviations: proof of Lemma~\ref{l:largedev}}\label{s:largedev}

We suppress the dependence on the spectral parameter $E$, on which the estimates
below are locally uniform. Fix $x \in S^1$. It will suffice to prove the following: for any $\epsilon \in (0, 1/e)$
\[
\mathbb{P} \left\{ \left| \frac{1}{N} \log \| \Phi_N x\| - \gamma\right| \geq \epsilon \right\}
\leq C \log \frac1\epsilon \exp\left\{-\frac{c\epsilon^2 N}{\log \frac1\epsilon}\right\}~. \]
Let $x_0 = x$, and let $x_{j+1} = T_{j+1} x_j / \|T_{j+1} x_j\|$. The vectors $x_j \in S^1$ form a Markov chain, and
\[ \| \Phi_N x\| = \prod_{j=1}^N \|T_j x_{j-1}\|~. \]
Our strategy from this point (based on two arguments going back to the
work of S.\ N.\ Bernstein \cite{Bern1,Bern}) is as follows. Fix $k = \lceil \log \frac1\epsilon\rceil$, 
and split the product into $k$ sub-products corresponding to the different
residues of $j$ modulo $k$. The terms in each sub-product are almost independent;
we make them independent by restarting the Markov chain from an invariant
distribution every $k$ steps. Then we obtain a bound on the positive and
negative fractional moments of each sub-product, from which the desired estimate
follows using the Chebyshev inequality.

Formally, for each $j$ choose (independently) a random vector
on the circle, distributed according to the invariant measure of the Markov
chain; denote this vector by $y_{j-k,j}$. Then  set 
$y_{n+1,j} = T_{n+1} y_{n,j}/ \|T_{n+1} y_{n,j}\|$ for  $n \geq j-k$,
and, finally, define $y_j = y_{j, j}$. Then, for each $r$, the random variables $\left\{ y_j \, \mid \, j \in I_r\right\}$,
where
\[ I_r = (r + k \mathbb Z)
\cap \{1,\cdots,N\}~,\]
are jointly independent. The vectors $y_j$ are close to $x_j$:  for $j \geq k$,
\begin{equation}\label{eq:condos} \mathbb{E} \left\{ \|x_j - y_j  \| \, \mid \,
 v_1, v_2, \cdots,v_{j-k} \right\} \leq C \exp(-c k)~,
\end{equation}
as implied by  the following consequence of Lemma~\ref{l:inv}:
\[ \mathbb{P} \left\{ V_{j,j-k}^*e_2 \in \mathcal C_\delta \right\} \leq C \left[ \delta^c + e^{-ck} \right]~. \]
Denote
\[ A_r = \prod_{j \in I_r} \|T_j x_{j-1}\|~, \quad \text{so that} \quad  \| \Phi_N x\| = \prod_{r=1}^k A_r~,\]
and observe that (for $0 \leq p \leq 1$)
\begin{equation}\label{eq:u1}
\|T_j x_{j-1}\|^{p} \leq  \|T_j y_{j-1}\|^{p} + \|T_j\|^p \|x_{j-1} - y_{j-1}\|^p\end{equation}
whereas
\begin{equation}\label{eq:l1}\begin{split} \|T_j x_{j-1}\|^{-p} &=
\|T_j y_{j-1}\|^{-p} + ( \|T_j x_{j-1}\|^{-p}  - \|T_j y_{j-1}\|^{-p} )  \\
&\leq \|T_j y_{j-1}\|^{-p} + \frac{ \|T_j y_{j-1}\|^{p}  - \|T_j x_{j-1}\|^{p}}{ \|T_j y_{j-1}\|^{p}  \|T_j x_{j-1}\|^{p}} \\
&\leq  \|T_j y_{j-1}\|^{-p} +  \|T_j\|^{3p} \|x_{j-1}- y_{j-1}\|^{p}~.
\end{split}\end{equation}
Also observe that (for $\eta$ from the formulation of the lemma)
\begin{equation}\label{eq:tmpmoments} \mathbb E \log \|T_j y_{j-1}\| = \gamma~, \quad
\mathbb E \|T_j y_{j-1}\|^{\pm\eta}
\leq C < \infty~.\end{equation}
From (\ref{eq:condos}) and (\ref{eq:tmpmoments}) we obtain that for $0 < p < \min(\frac{\eta}{10}, 1)$
\begin{equation}
\begin{split}
\mathbb{E} \left\{ \|T_{j} x_{j-1}\|^p \, \mid \, v_1, v_2, \cdots,v_{j-k} \right\}
&\leq 1 + p \gamma + C (p^2 + e^{-ck}) \\
\mathbb{E} \left\{ \|T_{j} x_{j-1} \|^{-p} \, \mid \, v_1, v_2, \cdots,v_{j-k} \right\}
&\leq 1 + p \gamma + C (p^2 + e^{-ck})~.
\end{split}
\end{equation}
Taking the products of each of these inequalities over $j \in I_r$
and using the exponential Chebyshev inequality, we have:
\begin{equation}
\mathbb P \left\{ \left| \log A_r - \gamma \# I_r \right| \geq \epsilon \right\}
\leq \exp(- c \epsilon^2 \# I_r)~,
\end{equation}
whence Lemma~\ref{l:largedev} follows by the union bound.
\qed

\begin{rmk} Using a slightly longer spectral-theoretic argument, one may
dispose of the logarithmic terms in (\ref{eq:largedev}).
\end{rmk}

\subsection{H\"older continuity: proof of Lemma~\ref{l:lepage} }\label{s:lepage}

Let $R \geq e$, $\delta \in (0, 1/e]$, and   $N = \lfloor \log \frac1\delta \rfloor$.
By the large deviation estimate (\ref{eq:largedev}), for any $|E|, |E'| \leq R$
\[ \mathbb{P} \left\{ \Big| \log \|\Phi_N(E)\| - \gamma(E)\Big| <  \frac{\delta}{3}~, \quad
\Big| \log \|\Phi_N(E')\| - \gamma(E')\Big| <  \frac{\delta}{3} \right\} \geq \frac34~.
\]
Next, by the assumption $\mathbb{E} |v_1|^\eta  < \infty$, we have for sufficiently
large $C$:
\[ \mathbb{P} \left\{ \| \Phi_j \| \geq (CR)^j \right\}  \leq e^{-10 j}~,\]
therefore with probability $> 3/4$
\[ \forall 1 \leq j \leq N \quad \| \Phi_j(E) \|, \|\Phi_j(E')\| \leq (CR)^j~, \quad
\| \Phi_{N,j} (E)\|, \|\Phi_{N,j} (E')\| \leq (CR)^{N-j}~,\]
and on this event
\[ \| \Phi_N(E) - \Phi_N(E') \| \leq |E-E'| N (CR)^N~.\]
Therefore for $|E-E'| \leq (2CR)^{-N}  = \delta^{\log(2CR)}$
\[ \left|\gamma(E) - \gamma(E')\right| \leq \delta~,\]
as claimed. \qed

\section{Outlook}\label{s:comments}

Let us briefly comment on possible extensions and directions for further study.

\paragraph{Other potentials}
As mentioned in Remark~\ref{rmk:ext}, it would be interesting to explore the counterparts
of Theorem~\ref{thm:realev} for other stationary (but non-independent) potentials.
It may be of independent interest to explore the counterparts of
Proposition~\ref{prop:main} in this setting.

\paragraph{Higher dimension} The arguments used in the proof of 
Theorem~\ref{thm:realev} can be recast into the language of resolvent estimates.
In this form, they are applicable to the following higher-dimensional analogue of (\ref{eq:defop}) acting on $\ell_2(\mathbb Z^d / N \mathbb Z^d)$:
\begin{equation}\label{eq:defop'}
\begin{split} 
(H_{N, d}(g) \psi) (x) &=  e^{-g} \psi(x+e_1) + e^g \psi(x-e_1) \\
&+ \sum_{j=2}^d (\psi(x+e_j) + \psi(x-e_j)) + v_x \psi(x)~, 
\end{split}\end{equation}
where $v_x$ are i.i.d. We state a sample result that can be proved by these
arguments.
\begin{prop}\label{thm:realev'} Assume that  the cumulative distribution function of $v_1$ is uniformly H\"older of order $\nu > 1/2$. Let $I \subset \mathbb R$ be a bounded interval such that, for some $\eta \in (0, 1)$,
\begin{equation}\label{eq:fr} \mathbb{E} |(H_{N,d}(0) - E)^{-1}(x, y)|^\eta < C \exp(-\gamma \eta \min(|x_1 - y_1|, N-|x_1 - y_1|))\end{equation}
for all $E$ in $I$. Then for any $g < \gamma$ 
there exists $c >0$ such that
\[
\mathbb{P} \Big\{ \forall j \,\,\, \text{s.t.} \,\,\,  \lambda_j(0) \in I:  \,\,\,\,
\lambda_j(g) \in \mathbb R \text{ and } |\lambda_j(g) - \lambda_j(0)| \leq e^{-c N} \Big\} \underset{N \to \infty}\longrightarrow 1~.\]
\end{prop}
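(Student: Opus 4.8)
}

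The plan is to reproduce the scheme of the proof of Theorem~\ref{thm:realev} in resolvent language, so that the one-dimensional transfer-matrix arguments are replaced by the fractional-moment hypothesis \eqref{eq:fr}. The starting point is the analogue of Lemma~\ref{l:eq}: for the periodic operator $H_{N,d}(g)$, conjugation by the diagonal unitary $D$ with entries $e^{g x_1}$ relates $H_{N,d}(g)$ to a non-self-adjoint operator which is a rank-two (periodic-boundary) perturbation of $D H_{N,d}(0) D^{-1}$. Writing the characteristic equation of $H_{N,d}(g)$ in the form $\det(1 - e^{\pm Ng} (H_{N,d}(0)-z)^{-1} R_\pm) = 0$, where $R_\pm$ are the rank-one operators supported near the seam $\{x_1 = 0\} \cup \{x_1 = N\}$ coming from the two corner hoppings, one sees that $z$ is an eigenvalue of $H_{N,d}(g)$ iff the $2\times2$ determinant obtained by sandwiching $(H_{N,d}(0)-z)^{-1}$ between the seam sites vanishes. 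The key quantity is thus the $2 \times 2$ matrix of resolvent entries $G_z(0^+, N^-)$, $G_z(N^-,0^+)$, etc., across the seam, which plays the r\^ole that $\Phi_N(z)$ played before.

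Next I would establish the resolvent counterpart of Proposition~\ref{prop:main}. Fix a grid of $\lfloor e^{cN}\rfloor$ points in $I$; for each grid point $E$, the hypothesis \eqref{eq:fr} together with a Markov/Chebyshev argument (as in the deduction of $\Omega_1$ in the proof of Proposition~\ref{prop:main}) gives, with probability $\geq 1 - e^{-c'N}$, that the relevant seam-to-seam resolvent entry has modulus at most $e^{-(\gamma-\epsilon)N}$ whenever $E$ is at distance $\geq e^{-cN}$ from $\operatorname{spec} H_{N,d}(0)$; by a union bound this holds at all grid points simultaneously w.h.p. To pass from the grid to all of $I$ one needs (i) a deterministic a-priori Lipschitz bound on $E \mapsto G_E$ away from the spectrum, and (ii) control on the gaps of $\operatorname{spec} H_{N,d}(0)$ inside $I$ — this is exactly where the H\"older hypothesis of order $\nu > 1/2$ enters, via the Minami estimate and the Wegner estimate (cf.\ \cite{Min,CGK,Graf}), which guarantee that eigenvalues in $I$ are (polynomially in $N$) separated and that no eigenvalue sits in an exponentially small interval without a grid point nearby, so that at least one grid point lands in each gap. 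This yields: w.h.p., for every pair of consecutive eigenvalues $\lambda' < \lambda$ of $H_{N,d}(0)$ in $I$, the seam resolvent on the open gap $(\lambda',\lambda)$ is exponentially small throughout; equivalently, the function $E \mapsto e^{-Ng}\times(\text{seam resolvent entry})$ runs from $+\infty$-type blow-up near each endpoint down through values $\ll 1$, so by the intermediate value theorem the scalar equation locating eigenvalues of $H_{N,d}(g)$ has a real solution in each such gap, exponentially close to one endpoint.

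Finally, a counting argument closes the proof exactly as in Section~\ref{s:pf}: each consecutive gap $(\lambda_j(0),\lambda_{j-1}(0))$ inside $I$ contributes (at least, hence exactly) one real eigenvalue $\lambda_j(g)$ of $H_{N,d}(g)$, lying within $e^{-cN}$ of an endpoint; since there are as many such slots as eigenvalues $\lambda_j(0) \in I$ (up to the two boundary eigenvalues of $I$, handled as in the $d=1$ case), all of them are accounted for, real, and exponentially close to the Hermitian ones. The main obstacle is the grid-to-continuum step, i.e.\ item (ii) above: in $d \geq 2$ the spectrum of $H_{N,d}(0)$ in $I$ need not be simple or well-separated on the naive scale, so one genuinely needs the Minami-type input (available under $\nu > 1/2$) to rule out near-degeneracies and exponentially thin gaps; replacing it, so as to reach the generality $\mathbb{E}|v_1|^\eta < \infty$ as in $d=1$, would require a higher-dimensional analogue of Bourgain's Lemma~\ref{l:bourgain}, which is not available. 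A secondary technical point is verifying that the a-priori Lipschitz bound (i) does not destroy the exponential margin: since the Lipschitz constant is at most exponential in $N$ (from $\|(H_{N,d}(0)-E)^{-1}\| \leq (\operatorname{dist}(E,\operatorname{spec}))^{-1}$ and $N$ terms), one wins by choosing the grid spacing $e^{-cN}$ with $c$ large enough relative to $\gamma$, exactly as in the proof of Lemma~\ref{l:lepage}.
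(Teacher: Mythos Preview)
Your sketch has the right ingredients (mesh of $e^{cN}$ points, fractional-moment input at mesh points, Minami--Wegner from $\nu>1/2$ to separate eigenvalues) but the core step --- reducing to a ``$2\times 2$ determinant'' of seam-to-seam resolvent entries and then running the intermediate value theorem on a ``scalar equation'' --- does not survive the passage to $d\geq 2$. The seam $\{x_1=0\}\cup\{x_1=N-1\}$ carries $N^{d-1}$ sites in each slice, so the periodic-boundary correction is a rank-$2N^{d-1}$ perturbation, not rank two; the object replacing your $2\times 2$ matrix is an $2N^{d-1}\times 2N^{d-1}$ Fredholm-type determinant, and there is no monotone scalar function on each spectral gap to which IVT applies. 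Likewise, the ``each consecutive gap $(\lambda_j(0),\lambda_{j-1}(0))$ contributes one real eigenvalue'' picture is one-dimensional; in $d\geq 2$ the Hermitian spectrum in $I$ has no band/gap structure.

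The paper's route avoids this entirely by replacing the root-finding argument with a continuity argument in the parameter $g$. One shows, using \eqref{eq:fr} and a union bound over the mesh, that with high probability \emph{no} mesh point $E_l$ is an eigenvalue of $H_{N,d}(g')$ for any $g'\in[0,g]$ (the fractional-moment decay with rate $\gamma>g$ forces the relevant perturbative series/resolvent identity to converge, uniformly in $g'\leq g$, at each fixed $E_l$). The Minami estimate then guarantees that each mesh interval $(E_l,E_{l+1})$ contains at most one eigenvalue of $H_{N,d}(0)$. Since $H_{N,d}(g')$ has real matrix entries, its non-real eigenvalues occur in conjugate pairs, so the number of real eigenvalues in $(E_l,E_{l+1})$ can change parity only when one crosses an endpoint; as this never happens for $g'\in[0,g]$, the single Hermitian eigenvalue stays real and trapped in $(E_l,E_{l+1})$. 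No IVT on a characteristic determinant is needed, and no structural information about the seam beyond the resolvent bound \eqref{eq:fr} is used. Your identification of where $\nu>1/2$ enters, and of the mesh size, is correct; what needs to change is the mechanism that pins each eigenvalue --- trap it by continuity in $g$, rather than by locating a root in $E$.
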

The assumption (\ref{eq:fr}) is a signature of Anderson localisation; it was shown
by Aizenman and Molchanov \cite{AM} to hold for any interval $I$ when the disorder 
is sufficiently strong, and for intervals $I$ at the spectral edges for any strength of 
the disorder. A similar result can be proved if (\ref{eq:fr}) is replaced with the 
conclusion of the multiscale analysis of Fr\"ohlich and Spencer \cite{FS}. Proposition~\ref{thm:realev'} confirms the prediction of Kuwae and Taniguchi \cite{KT}, which
was challenged in some of the subsequent works (see \cite{Molinari} and references
therein).

Similarly to Theorem~\ref{thm:realev}, the proof of Proposition~\ref{thm:realev'} 
makes use of a mesh  $(E_l)_{l=1}^{\lfloor e^{cN} \rfloor}$ in $I$. Instead of 
Lemma~\ref{l:eq}, one relies on the following observation: if an interval
$(E_l, E_{l+1})$ between a pair of adjacent points of the mesh contains exactly one eigenvalue $\lambda_j(0)$ of $H_{N,d}(0)$, and,  for all $0 < g' < g$, the
points  $E_l, E_{l+1}$ are not eigenvalues of $H_{N,d}(g')$, then also 
$\lambda_j(g) \in  [E_l, E_{l+1}] \subset \mathbb R$.

\paragraph{Beyond the smallest Lyapunov exponent}
While in dimension $1$ the conclusion of Proposition~\ref{thm:realev'} is
similar to that of Theorem~\ref{thm:realev}, we emphasise a distinction between
resolvents and transfer matrices, which becomes essential already for a 
one-dimensional strip of width $\geq 2$: the decay of the resolvent kernel is only 
sensitive to the smallest Lyapunov exponent, whereas the full description of the 
eigenvalues of non-Hermitian operators of the form considered here is believed to 
depend on all the Lyapunov exponents. Similarly, in higher dimension, the matrices 
(\ref{eq:defop'}) are believed to have some real eigenvalues in the spectral
regions in which Anderson localisation does not hold.

\end{document}